\documentclass[letterpaper]{article} 
\usepackage{aaai2026}  
\usepackage{times}  
\usepackage{helvet}  
\usepackage{courier}  
\usepackage[hyphens]{url}  
\usepackage{graphicx} 
\def\UrlFont{\rm}  
\usepackage{natbib}  
\usepackage{caption} 
\usepackage{amsmath}
\usepackage{amssymb}
\usepackage{amsthm}
\usepackage{booktabs}
\usepackage{array}
\usepackage{microtype}
\nocopyright

\newtheorem{definition}{Definition}
\newtheorem{theorem}{Theorem}
\newtheorem{proposition}{Proposition}

\title{Protocol-Driven Development: Governing Generated Software Through Invariants and Continuous Evidence}
\author{
    Jun He,
    Deying Yu
}
\affiliations{
    OpenKedge.io\\
    junhe@openkedge.io, deying@openkedge.io
}

\begin{document}

\maketitle

\nocopyright

\begin{abstract}
Automated program synthesis lowers the cost of producing implementations but introduces a harder governance problem: determining which generated artifacts are admissible. Natural-language specifications are ambiguous, and example-based tests sample only part of the behavioral space. Used alone, neither provides a sufficient control boundary.

We introduce \textbf{Protocol-Driven Development (PDD)}, where the primary software artifact is a machine-enforceable protocol rather than code. We define a protocol as the triplet $\mathcal{P} = (\mathcal{S}, \mathcal{B}, \mathcal{O})$, specifying structural, behavioral, and operational invariants. Their conjunction defines the admissible implementation space of a software component.

Under PDD, implementations are replaceable realizations discovered through constrained search. An implementation is admitted only if it satisfies the protocol and produces a verifiable \emph{Evidence Chain} of compliance. Admission is grounded in protocol satisfaction and recorded evidence rather than trust in the generator.

For deployed systems, we extend the Evidence Chain into a \emph{Dynamic Evidence Ledger}. Runtime verifiers append signed observations, invariant checks, and violations to the ledger, allowing monitorable obligations to be continuously attested. This connects live failures back to the generation loop without granting the generator runtime authority.

Combining formal methods, property testing, runtime verification, policy-as-code, and software provenance, PDD defines a governance model for automated software engineering. Its organizing principle is that code is transient, while the protocol carries durable authority.
\end{abstract}


\section{Introduction}

Generative program synthesis has reduced the marginal cost of producing candidate implementations. Boilerplate, interface implementations, service scaffolds, and many routine code changes can now be produced quickly by synthesis engines. This shift moves the central pressure from production to admission, and from admission to continued accountability in operation. The difficult question is no longer only how to produce code, but how to decide which generated implementations are structurally valid, behaviorally correct, operationally bounded, eligible for regeneration, and still compliant after deployment.

Traditional software engineering methodologies were developed when implementation effort dominated many workflows. In \emph{Spec-Driven Development} (SDD), natural-language documents describe intended structure and behavior, but they are often too ambiguous for automated admission. Different engineers---or different generative systems---may interpret the same requirement incompatibly, producing semantic drift, inconsistent assumptions, and hidden side effects.

\emph{Test-Driven Development} (TDD) improves on this model by encoding expected behavior in executable tests~\cite{beck2003tdd}. Tests reduce ambiguity and increase confidence in functional behavior, but they are fundamentally extensional: they verify selected cases rather than defining the full space of admissible implementations. They rarely express structural contracts, authority boundaries, or operational constraints such as latency limits, side-effect restrictions, dependency controls, or resource quotas.

Automated code generation makes these weaknesses harder to ignore. When implementations can be synthesized and regenerated at low cost, source code becomes less durable than the constraints that determine whether the code is admissible. The operative question changes from ``How do we implement this component?'' to ``What constraints must every acceptable implementation satisfy?'' In this setting, software engineering becomes less a matter of prescribing one implementation and more a matter of defining the invariant boundary within which implementations may be safely discovered.

We call this model \textbf{Protocol-Driven Development (PDD)}. In PDD, the primary artifact is a machine-enforceable protocol rather than implementation code. We formally define a protocol as
\[
\mathcal{P} = (\mathcal{S}, \mathcal{B}, \mathcal{O}),
\]
where $\mathcal{S}$ denotes \emph{structural invariants}, $\mathcal{B}$ denotes \emph{behavioral invariants}, and $\mathcal{O}$ denotes \emph{operational invariants}. Structural invariants define rigid type and interface contracts through typed handshakes. Behavioral invariants specify semantic properties that must hold for all admissible implementations. Operational invariants impose explicit capability boundaries covering side effects, latency, external dependencies, and resource consumption. Their conjunction defines the admissible implementation space within which automated generators may explore.

Under PDD, implementations are treated as replaceable realizations of a governing protocol. Automated generators may generate, discard, and regenerate candidate code, but a candidate is admissible only if it satisfies the protocol. We propose a \emph{Validator Loop} that performs structural validation, property-based verification, and operational conformance checks before admission. Admitted implementations then produce an \emph{Evidence Chain}: a cryptographically verifiable record of the governing protocol, implementation metadata, validation results, and observed behavioral characteristics. The Evidence Chain supports auditability, accountability, and replay of admission decisions when replay inputs are preserved.

Pre-deployment evidence is not the end of governance. Production traffic can expose memory leaks, race conditions, dependency drift, adversarial inputs, and operational degradation that were absent from simulation. PDD therefore extends Evidence Chains into \emph{Dynamic Evidence Ledgers}: append-only records that include signed runtime observations, invariant checks, violations, and remediation outcomes. Runtime verification layers enforce protocol boundaries outside the generated implementation, and violation blocks can be converted into structured repair contexts for the next generation attempt.

For concreteness, Appendix~\ref{app:minimal-pdd-bundle} sketches a minimal protocol bundle, typed handshake, behavioral invariant set, capability manifest, and evidence record.

We term the recurring cost of this ambiguity the \emph{Natural Language Tax}: the cost of interpreting, reconciling, and maintaining ambiguous textual descriptions. Rather than describing desired behavior in prose and relying on downstream interpretation, PDD encodes invariants directly as machine-enforceable assertions. Typed handshakes address type drift, property-based assertions address intent drift, and capability manifests address side-effect drift. The protocol becomes the authoritative admission artifact, and valid implementations are constrained to remain within its boundaries.

The model draws on formal methods, property-based testing, zero-trust security, and generative systems~\cite{hoare1969axiomatic,clarke1999model,claessen2000quickcheck,rose2020zerotrust}. It combines invariants, executable properties, explicit verification, and constrained search into a development model for software synthesis.

PDD does not claim to invent interface schemas, property-based testing, policy enforcement, or provenance and attestation mechanisms. Its novelty is the elevation of the protocol to the primary development artifact: a single governing object that jointly specifies structure, behavior, operational authority, and evidence-producing admission for generated software. Schema-first or interface-first development stabilizes component boundaries but usually leaves behavior, authority, and admission evidence outside the interface artifact. TDD and property-based testing constrain selected behavioral properties but do not by themselves define operational capability boundaries or provenance-linked admission. Policy-as-code governs authority but is typically external to software construction; supply-chain provenance records where artifacts came from rather than defining which artifacts are admissible. PDD unifies these concerns into a protocol-centered development model in which generated implementations are proposals admitted only through invariant satisfaction and verifiable evidence.

PDD also relates to runtime governance for autonomous systems while remaining independent of any particular deployment stack. Sovereign Agentic Loops~\cite{he2026sal} and OpenKedge~\cite{he2026openkedge} separate model-generated proposals from privileged execution through control boundaries, policy evaluation, and evidence records. PDD applies the same proposal-admission structure to software construction and then carries it into operation through runtime attestation.

The thesis of this work is:

\begin{quote}
\emph{Code is transient; protocol is sovereign.}
\end{quote}

By making protocols the durable engineering artifact, PDD treats software creation as constrained discovery: implementations may change while admissibility conditions remain explicit, testable, and auditable.

\vspace{0.5em}
\noindent\textbf{Contributions.}
The paper makes the following contributions:

\begin{enumerate}
    \item We propose Protocol-Driven Development (PDD), a model that treats protocols, rather than implementation code, as the durable artifact of automated software engineering.
    \item We formalize a protocol as a triplet of structural, behavioral, and operational invariants that together define an admissible implementation space for generated software.
    \item We define protocol compliance, protocol-level substitutability, and evidence-producing acceptance as the core relations governing admission under PDD.
    \item We outline the Validator Loop, an architectural pattern that separates protocol authoring, implementation generation, validation, and evidence preservation.
    \item We introduce Evidence Chains and Discovery Logs as artifacts for linking candidate implementations to validation outcomes, provenance, admission decisions, and later runtime observations.
    \item We extend Evidence Chains into Dynamic Evidence Ledgers, define continuous protocol attestation, and introduce a Runtime Verification Layer for enforcing invariants after deployment.
    \item We specify a closed-loop remediation path in which signed runtime violation blocks become repair context for subsequent generation and validation.
    \item We characterize the Natural Language Tax and argue that machine-enforceable protocols provide a disciplined way to move from narrative intent to explicit admissibility conditions.
    \item We synthesize connections to formal methods, property-based testing, runtime verification, policy-as-code, software provenance, automated software engineering, and runtime governance for autonomous systems.
\end{enumerate}

\section{Background and Motivation}

Protocol-Driven Development (PDD) sits at the intersection of specification, testing, formal verification, policy enforcement, and automated program synthesis. Each tradition addresses part of the correctness problem, but none by itself provides an admission model for generated implementations.

\subsection{Spec-Driven Development}

Spec-Driven Development (SDD) treats the specification as the prior artifact that guides implementation. Specifications range from prose requirements to machine-readable interface descriptions such as OpenAPI, JSON Schema, and Protocol Buffers~\cite{openapi2021,jsonschema2022,protobuf2008}. These artifacts establish a shared vocabulary and support automation such as documentation, client stubs, scaffolds, validation, and interface conformance testing.

SDD, however, leaves a semantic gap between what is written and what must be enforced. A requirement such as ``return the current user profile'' rarely specifies determinism, retry behavior, external calls, or latency budgets. Schema languages reduce structural ambiguity, but they usually do not capture full behavioral and operational semantics. PDD begins from this gap: the artifact that governs generated software must be more restrictive than prose and more expressive than schema alone.

\subsection{Test-Driven Development}

Test-Driven Development (TDD) moves part of engineering intent into executable form. Tests expose assumptions, define concrete acceptance criteria, and reject regressions automatically~\cite{beck2003tdd}. Fuzzing and property-based testing extend this idea by exploring larger behavior spaces through generated inputs~\cite{claessen2000quickcheck}.

Yet tests remain partial witnesses. They may explore examples or input families, but they do not necessarily define the full protocol by which a module is permitted to exist inside a larger system. Tests also tend to focus on functional behavior while leaving authority boundaries implicit. In PDD, tests are one mechanism for validating behavioral invariants, not the complete engineering artifact.

\subsection{Formal Methods}

Formal methods provide an important precedent for PDD. Hoare logic, model checking, TLA+, and Alloy show that invariants, preconditions, postconditions, temporal properties, and relational constraints can expose design errors before implementation~\cite{hoare1969axiomatic,clarke1999model,lamport2002specifying,jackson2002alloy}.

PDD does not replace formal methods; it changes where formalization sits. A protocol can include temporal specifications where appropriate, and it can combine typed interfaces, executable properties, and capability manifests within a single admission artifact. PDD does not derive every implementation from first principles; it requires every admitted implementation to produce evidence that it satisfies the constraints needed for composition.

\subsection{Zero Trust, Policy Systems, and Provenance}

Zero-trust architecture rejects implicit trust based on location, identity, or reputation, and instead requires explicit verification~\cite{rose2020zerotrust}. Policy-as-code systems such as Open Policy Agent decouple authorization logic from application code~\cite{opa2016}. Supply-chain frameworks such as SLSA and in-toto record provenance and attestations so that artifacts can be evaluated through evidence rather than assumption~\cite{slsa2021,torresarias2019intoto}.

PDD brings that verification stance into software construction. An implementation generator proposes code, but the code must satisfy a protocol under validation. Operational invariants restrict what a module may do regardless of whether it appears functionally correct, and Evidence Chains make admission auditable rather than dependent on informal confidence.

\subsection{Automated Software Engineering}

The shift toward generated software was anticipated by the ``Software 2.0'' framing, which treats parts of software construction as search over program space rather than direct manual authorship~\cite{karpathy2017software}. Code-generating models synthesize programs from natural-language prompts~\cite{chen2021codex}, generative assistants can improve productivity on selected tasks~\cite{peng2023copilot}, and repository-level agents and benchmarks make long-horizon automated software engineering measurable~\cite{jimenez2024swebench,yang2024sweagent,wu2024devin}.

Low-cost generation makes this problem concrete. The same prompt may produce different structures, dependencies, resource behaviors, and failure modes across models, temperatures, versions, and tool environments. PDD addresses this risk by making the prompt non-authoritative: natural language may initiate development, but the protocol determines admission.

\subsection{Why a New Model Is Needed}

The prior traditions reveal a consistent pattern. Specifications communicate intent but remain ambiguous; tests provide executable evidence but remain partial; formal methods provide rigor but are often selective; policy systems govern authority but usually at runtime; automated coding systems accelerate implementation but amplify variation.

This leaves a gap: a governing artifact that unifies structure, semantics, and operational authority into an admissibility boundary for generated software. PDD addresses this gap by treating implementation as a search outcome and protocol as the durable object that defines which software artifacts are admissible.

The motivating claim of PDD is not that prior models were wrong. It is that they are incomplete for automated program synthesis. When implementations are inexpensive to produce and easier to regenerate, the enduring challenge is not only how to describe or test one program, but how to define the invariants that every acceptable program must obey.

\section{Protocol-Driven Development Model}

We formalize PDD in terms of three commitments: the protocol is the primary engineering artifact, compliance gives the admission criterion, and software construction is constrained search over a protocol-defined implementation space.

\subsection{The Protocol as Governing Artifact}

PDD starts from the premise that implementation code is not the durable representation of a software component. The durable representation is a machine-enforceable protocol specifying the invariant boundaries within which implementations are discovered, validated, and substituted. The protocol governs admission; implementations remain replaceable realizations.

\begin{definition}[Protocol]
A \emph{protocol} is a triplet
\[
\mathcal{P} = (\mathcal{S}, \mathcal{B}, \mathcal{O}),
\]
where:
\begin{itemize}
    \item $\mathcal{S}$ is the set of \emph{structural invariants};
    \item $\mathcal{B}$ is the set of \emph{behavioral invariants};
    \item $\mathcal{O}$ is the set of \emph{operational invariants}.
\end{itemize}
\end{definition}

The protocol defines the admissibility boundary for a software component. Any implementation that satisfies all three invariant classes is protocol-compliant, regardless of the algorithms, programming language, or internal organization used to realize it. The same triplet also defines the observation boundary for deployed instances: runtime evidence may evolve, but the protocol remains the governing object against which new observations are checked.

\subsection{Structural Invariants}

Structural invariants specify the syntactic and interface-level constraints governing communication between components.

\begin{definition}[Structural Invariants]
Structural invariants $\mathcal{S}$ define rigid type and schema constraints over input and output representations, interface signatures, field types and cardinalities, and versioning rules.
\end{definition}

Examples include Protocol Buffers, TypeScript interfaces, JSON Schema, and OpenAPI specifications. Structural invariants ensure that all compliant implementations expose the same typed handshake.

\subsection{Behavioral Invariants}

Behavioral invariants specify the semantic properties that must hold across admissible executions.

\begin{definition}[Behavioral Invariants]
Behavioral invariants $\mathcal{B}$ are predicates over observable behavior that must hold for all valid inputs and admissible states under the protocol's observation model.
\end{definition}

Examples include idempotence, determinism, monotonicity, conservation laws, and error propagation guarantees. These invariants can be encoded as property-based tests, logical assertions, metamorphic relations, or formal specifications.

\subsection{Operational Invariants}

Operational invariants bound the capabilities, side effects, and resource consumption of an implementation.

\begin{definition}[Operational Invariants]
Operational invariants $\mathcal{O}$ specify admissible operational behavior, including constraints on external calls, dependency usage, disk and network access, execution latency, memory consumption, and concurrency.
\end{definition}

Examples include egress allowlists, latency and memory bounds, and dependency restrictions. They function as capability manifests or policy boundaries, restricting operational authority independently of functional correctness.

\subsection{Modularity, Testability, and Operational Boundedness}

The three invariant classes correspond to three governance properties. Structural invariants support \emph{modularity} through typed handshakes; behavioral invariants support \emph{testability} through executable or checkable predicates; operational invariants support \emph{boundedness} by restricting authority, side effects, and resource usage. A candidate must satisfy all three to be admissible.

\subsection{Implementation Space}

Let $\mathcal{I}$ denote the universe of candidate implementations expressible within a given programming environment.

\begin{definition}[Admissible Implementation Set]
Given protocol $\mathcal{P}$, the admissible implementation set is
\[
\mathcal{I}_{\mathcal{P}}
=
\{\, I \in \mathcal{I} \mid I \models \mathcal{P} \,\}.
\]
\end{definition}

Equivalently, if $\mathcal{I}_{\mathcal{S}}$, $\mathcal{I}_{\mathcal{B}}$, and $\mathcal{I}_{\mathcal{O}}$ denote the sets of implementations satisfying the respective invariant classes, then
\[
\mathcal{I}_{\mathcal{P}}
=
\mathcal{I}_{\mathcal{S}}
\cap
\mathcal{I}_{\mathcal{B}}
\cap
\mathcal{I}_{\mathcal{O}}.
\]

Thus, a protocol defines an admissible region of implementation space, not a single implementation.

\subsection{Compliance Relation}

Protocol compliance is defined as conjunction across the three invariant classes.

\begin{definition}[Protocol Compliance]
An implementation $I$ is \emph{protocol-compliant} if and only if
\[
I \models \mathcal{P}
\iff
\bigl(
I \models \mathcal{S}
\bigr)
\land
\bigl(
I \models \mathcal{B}
\bigr)
\land
\bigl(
I \models \mathcal{O}
\bigr).
\]
\end{definition}

The definition separates admissibility from implementation strategy: internally different implementations are equally compliant when they satisfy the same protocol-visible constraints.

\subsection{Development as Constrained Search}

Under Protocol-Driven Development, software construction is modeled as search over $\mathcal{I}_{\mathcal{P}}$.

An implementation generator explores candidate realizations
\[
I_1, I_2, \dots, I_n \in \mathcal{I},
\]
until it discovers an implementation $I_k$ such that
\[
I_k \models \mathcal{P}.
\]

The implementation is accepted because it satisfies the governing constraints, not because it is manually authored, generator-preferred, or trusted by origin.

\subsection{Protocol-Level Substitutability}

A defining property of PDD is that implementations are replaceable with respect to the commitments made by the protocol. Compliant implementations need not be observationally identical. A protocol deliberately leaves algorithmic choices, internal data structures, or performance strategies unspecified when they are outside its commitment surface. Substitutability holds for clients whose assumptions are limited to the protocol's structural, behavioral, and operational guarantees.

\begin{definition}[Protocol-Level Substitutability]
Two implementations $I_a$ and $I_b$ are \emph{substitutable under protocol} $\mathcal{P}$ for a protocol-respecting client if both implementations satisfy the protocol,
\[
I_a \models \mathcal{P}
\quad \text{and} \quad
I_b \models \mathcal{P},
\]
and the client depends only on guarantees entailed by $\mathcal{P}$.
\end{definition}

Such implementations can differ internally while remaining interchangeable with respect to the guarantees the protocol actually makes. This supports regeneration, optimization, refactoring, and language migration when downstream components depend on the protocol rather than incidental behavior.

\subsection{Evidence-Producing Acceptance}

Compliance must be established through machine-verifiable evidence. We use \emph{Evidence Chain} to denote the ordered record linking a protocol bundle, a candidate implementation, validator execution, Discovery Logs, signed attestations, and the resulting admission decision. For deployed implementations, the chain can be extended into a Dynamic Evidence Ledger that appends runtime attestations and violation blocks while preserving the original build-time admission evidence.

\begin{definition}[Validation Function]
Let $\mathcal{E}$ be the space of valid evidence objects. Let
\[
\mathrm{Validate}(I,\mathcal{P}) \in \mathcal{E} \cup \{\bot\}
\]
be a validation function that returns an evidence object $E \in \mathcal{E}$ when the validator establishes $I \models \mathcal{P}$, and returns $\bot$ when admission fails.
\end{definition}

The evidence object $E$ is one signed element of the Evidence Chain, typically containing protocol versions, implementation hashes, validation outputs, and validator attestations. If validation fails, no acceptance evidence is produced. After deployment, successful and failed runtime attestations are not admission events for the current artifact; they are evidence about whether the admitted artifact continues to satisfy the monitorable projection of $\mathcal{P}$ under live observations.

\subsection{Summary}

In PDD, software construction centers on the protocol, the object that combines structural, behavioral, and operational invariants into an admissible implementation space. The protocol, rather than any specific implementation, becomes the authoritative admission contract for a component. Implementations are accepted only when accompanied by machine-verifiable evidence of compliance.

\section{Protocol Authoring and Disambiguation}

Protocol-Driven Development begins by translating architectural intent into machine-enforceable constraints. Informal requirements, design discussions, or prompts can initiate the process, but the authoritative artifact is the protocol: a specification whose admissibility conditions are machine-validatable.

Protocol authoring employs three mechanisms: \emph{typed handshakes} for structure, \emph{property-based assertions} for behavior, and \emph{capability manifests} for operational authority. Together they map prose to restrictive assertions over structure, semantics, and side effects.

\subsection{The Natural Language Tax}

The \emph{Natural Language Tax} is the recurring cost imposed when durable engineering intent is represented primarily in prose. It appears as \emph{interpretation cost}, when developers or agents infer types, edge cases, capability boundaries, and failures from underspecified statements; \emph{reconciliation cost}, when incompatible interpretations are repaired after implementation begins; and \emph{maintenance cost}, when prose drifts as implementations evolve.

PDD reduces this tax by making prose a staging artifact rather than the final control surface. Natural language may initiate protocol authoring, but the durable output is a set of machine-enforceable constraints: typed handshakes for representation, property-based assertions for behavior, and capability manifests for side effects and authority.

\subsection{Typed Handshakes}

A PDD protocol begins with a \emph{typed handshake}: a machine-readable contract specifying inputs, outputs, errors, and compatibility rules. Typed handshakes can be written in JSON Schema, OpenAPI, Protocol Buffers, TypeScript interfaces, or equivalent schema languages~\cite{jsonschema2022,openapi2021,protobuf2008}.

A typed handshake reduces \emph{type drift} between architectural intent, generated implementation, and downstream consumers. Instead of saying ``returns a user object,'' PDD defines a versioned schema with required and optional fields, enumerations, nullability, error variants, and deprecation rules. Ambiguity becomes a compile-time or validation failure.

As long as the handshake is preserved, protocol-compliant implementations remain interchangeable at the structural boundary.

\subsection{Property-Based Assertions}

Protocol authoring next elevates example-based expectations into general behavioral laws. Instead of stating informally that ``the handler should be idempotent,'' the protocol encodes the invariant directly as
\[
\forall x \in X,\quad f(f(x)) = f(x).
\]
Similarly, rather than saying that ``invalid input should fail safely,'' the protocol defines
\[
\mathrm{invalid}(x) \Rightarrow \mathrm{isError}(f(x)).
\]

These properties become validator targets for property-based testing frameworks, symbolic analysis tools, theorem provers, or runtime validation mechanisms. Regardless of the enforcement mechanism, the invariant itself becomes the authoritative statement of intended behavior.

Property-based assertions reduce \emph{intent drift}: they prevent an implementation from satisfying visible examples while violating the general property those examples were meant to express. In PDD, the acceptance criterion is not a curated test list, but the invariant itself.

\subsection{Capability Manifests}

The third step specifies what the implementation is allowed to do. A \emph{capability manifest} defines operational authority and resource boundaries, including:

\begin{itemize}
    \item file-system access;
    \item outbound network destinations;
    \item database operations and transaction limits;
    \item maximum external calls per request;
    \item memory and CPU budgets;
    \item latency targets;
    \item environment variables and secret access;
    \item concurrency bounds;
    \item permitted background work.
\end{itemize}

These constraints become enforcement targets for sandboxing, runtime instrumentation, policy engines, operating system controls, or deployment configuration.

Capability manifests reduce \emph{side-effect drift}. An implementation may pass functional tests while introducing hidden caches, third-party API calls, temporary file writes, or expanded transaction scopes. Under PDD, such behavior is part of the protocol: if a module has not been granted disk I/O, any write attempt is a violation regardless of functional output.

\subsection{Evidence Chains}

The final mechanism bridges the gap between protocol definition and verifiable acceptance. An \emph{Evidence Chain} is a signed record that binds a specific implementation artifact to the protocol it satisfies, the validator that assessed it, and the Discovery Log of its operational characteristics.

Evidence Chains reduce \emph{audit drift}: the divergence between running software and the rationale for why it was admitted. Instead of relying on statements such as ``it passed all tests locally,'' the Evidence Chain preserves a machine-checkable record of compliance.

\subsection{Constraining Ambiguity}

Protocol-Driven Development replaces descriptive statements with restrictive assertions. Human authors can still begin with prose, but prose is not authoritative. Each common class of ambiguity is mapped to a corresponding machine-enforceable mechanism, as summarized in Table~\ref{tab:disambiguation}.

\begin{table*}[t]
\centering
\caption{Protocol-Driven Development mapping of ambiguity classes.}
\label{tab:disambiguation}
\renewcommand{\arraystretch}{1.08}
\begin{tabular}{@{}p{0.22\textwidth}p{0.52\textwidth}p{0.18\textwidth}@{}}
\toprule
\textbf{Mechanism} & \textbf{Illustrative Narrative Requirement} & \textbf{Drift Removed} \\
\midrule
Typed handshake & ``Returns a user object'' & Type drift \\
Property-based assertion & ``Handles invalid input safely'' & Intent drift \\
Capability manifest & ``Do not call the database too much'' & Side-effect drift \\
Evidence Chain & ``It passed on my machine'' & Audit drift \\
\bottomrule
\end{tabular}
\end{table*}

The mapping moves ambiguity from social interpretation into enforceable structure. The architect need not prescribe every implementation detail; the architect defines the admissibility conditions that all valid implementations must satisfy.

\subsection{From Narrative to Law}

PDD moves from descriptive intent to formal constraint. Once structural, behavioral, and operational invariants are encoded as machine-enforceable constraints, implementations can be generated, discarded, and regenerated without changing the authoritative artifact. Protocol authoring therefore becomes the main design activity: the architect defines the invariant boundaries within which acceptable implementations are discovered.

\section{Validator Loop and Evidence Chains}

The \emph{Validator Loop} is the admission cycle for PDD. It separates protocol definition, candidate generation, validation, and admission. The generator searches; the protocol and validator define success. In deployed systems, the same evidence discipline continues through runtime attestation, but runtime observations do not bypass admission: repaired or regenerated artifacts still return to the Validator Loop before replacement.

Protocol Authors define admissibility conditions, Implementation Generators propose candidates, Validation Engines evaluate candidates, and Evidence Stores preserve the basis for acceptance. A software artifact becomes admissible only after compliance is evaluated and recorded.

\subsection{Contract Negotiation}

The Validator Loop begins when a \emph{Protocol Author} proposes a protocol $\mathcal{P}$. Because modules rarely exist in isolation, dependencies must be reconciled into a coherent admissibility boundary before implementation begins.

\emph{Contract negotiation} verifies that the typed handshakes, behavioral properties, and operational capabilities of dependent protocols are mutually compatible. It includes dependency resolution, compatibility checking, capability reconciliation, and conflict detection across transitive protocol boundaries.

The output of negotiation is a \emph{versioned protocol bundle} containing:

\begin{itemize}
    \item structural schemas and interface definitions;
    \item behavioral properties and regression constraints;
    \item operational capability manifests;
    \item protocol dependencies and compatibility metadata;
    \item approved validator implementations and versions.
\end{itemize}

Once sealed, the bundle becomes the target for generation. Generators are not permitted to silently weaken or reinterpret its constraints; protocol changes require an explicit version event and renewed negotiation.

The illustrative bundle in Appendix~\ref{app:minimal-pdd-bundle} shows one compact representation of these artifacts without prescribing a particular format.

\subsection{Automated Generation}

An \emph{Implementation Generator} receives the protocol bundle and searches for a candidate implementation using prompting, retrieval, synthesis, repair, evolutionary search, templates, or other strategies. PDD is agnostic to model, language, and toolchain.

A candidate implementation $I$ is treated as untrusted until it has passed validation against the protocol bundle.

Formally, the Implementation Generator explores a sequence
\[
I_1, I_2, \ldots, I_n \in \mathcal{I},
\]
until it discovers some implementation $I_k$ such that
\[
I_k \models \mathcal{P}.
\]

The loop permits multiple generators to compete against the same protocol and regenerated implementations to replace earlier ones without changing dependent contracts. The implementation is a provisional witness to satisfiability, not the durable artifact.

\subsection{Verification}

A \emph{Validation Engine} performs verification: it inspects artifacts, executes validation logic, monitors resources, and rejects non-compliant candidates. It acts as the admission controller for the implementation space.

Verification proceeds in three layers:

\begin{enumerate}
    \item \textbf{Structural validation} checks that interfaces compile, serialize, and conform to $\mathcal{S}$.
    \item \textbf{Behavioral validation} checks properties, examples, metamorphic relations, and regression suites against $\mathcal{B}$.
    \item \textbf{Operational validation} executes the candidate under policy, sandboxing, or instrumentation to verify compliance with $\mathcal{O}$.
\end{enumerate}

The layers are jointly necessary: structural checks alone miss semantics, behavioral checks leave hidden capability violations unobserved, and operational checks cannot establish functional meaning.

Because validators define the acceptance boundary, they must be trusted at least as much as the build and release system. In high-assurance settings, validators may themselves be versioned, sandboxed, reproducible, and attested.

\subsection{Discovery Logs}

When validation succeeds, the admitted implementation emits a \emph{Discovery Log}: an as-built record of what was produced and observed. The protocol states what \emph{must} be true; the Discovery Log records what \emph{was found} to be true of this implementation.

A Discovery Log includes entries such as:

\begin{itemize}
    \item implementation language and compiler versions;
    \item dependency graph and package hashes;
    \item generated files and artifact digests;
    \item validator identities and versions;
    \item property coverage and validation outcomes;
    \item observed resource usage;
    \item derived behaviors not explicitly enumerated in the original protocol.
\end{itemize}

Discovery Logs make acceptance auditable and provide feedback for protocol evolution. Useful recurring behaviors can be promoted into future protocol versions; undesirable behaviors can motivate stronger constraints.

\subsection{Evidence Chains}

An \emph{Evidence Chain} is the ordered record linking protocol constraints, generated implementations, validation outcomes, and deployment artifacts. Within PDD, it binds software admission to accountable evidence.

Let the validator produce an evidence object
\[
E = H(\mathcal{P}, I, V, R, t),
\]
where:
\begin{itemize}
    \item $\mathcal{P}$ is the negotiated protocol bundle;
    \item $I$ is the admitted implementation artifact;
    \item $V$ identifies the validator implementations and versions;
    \item $R$ contains validation results and measured observations;
    \item $t$ records time, environment, and provenance metadata;
    \item $H$ denotes a cryptographic digest or signed attestation over these elements.
\end{itemize}

The evidence object supports audit, end-to-end accountability, and replay whenever the relevant validator inputs are preserved. A downstream system asks whether the artifact is linked to an approved protocol and whether the Evidence Chain records satisfaction under approved validators.

The same evidence structure links development and runtime governance when deployment systems consume development-time evidence before runtime admission. Once an implementation is deployed, runtime monitors can append additional evidence blocks to the same logical record. These blocks may attest continued compliance, record bounded degradation, or capture invariant violations for later repair. Section~\ref{sec:runtime-evidence} formalizes this extension as a Dynamic Evidence Ledger.

\subsection{Acceptance as an Evidence-Producing Event}

In PDD, validation is the mechanism by which software becomes admissible. An implementation is accepted if and only if:

\begin{enumerate}
    \item a protocol bundle has been negotiated and sealed;
    \item a candidate implementation has been generated;
    \item the candidate satisfies all structural, behavioral, and operational invariants; and
    \item the validator emits signed evidence linking the artifact to the governing protocol.
\end{enumerate}

The admission principle is:

\begin{quote}
\emph{Code is accepted not because it appears correct, but because compliance has been validated and recorded.}
\end{quote}

Under this model, each admitted implementation carries a verifiable chain showing why it was permitted to enter the system.

\subsection{Summary}

The Validator Loop separates proposal, verification, and admission. Protocol bundles define admissibility, generators explore candidates, validators establish compliance, Discovery Logs record observations, and Evidence Chains preserve the basis for admission.

\section{Theoretical Foundations}

The formal account below focuses on the core PDD relations. It does not prove that every useful software property is decidable or that validation is complete. Rather, it states what follows when protocols define the relevant observation boundary and validators are sound with respect to that boundary.

\subsection{Implementation-Space Interpretation}

Let $\mathcal{I}$ denote the set of candidate implementations expressible by a generator within a target environment. The set includes programs produced through prompting, retrieval, synthesis, repair, mutation, or any other search strategy. A natural-language prompt defines at most an imprecise region of $\mathcal{I}$: different models, sampling temperatures, prompts, or tool contexts may produce implementations that appear plausible while diverging in structure, behavior, dependencies, and operational footprint.

PDD replaces this open-ended search with an explicit admissibility boundary. For a protocol
\[
\mathcal{P}=(\mathcal{S},\mathcal{B},\mathcal{O}),
\]
let $\mathcal{I}_{\mathcal{S}}$, $\mathcal{I}_{\mathcal{B}}$, and $\mathcal{I}_{\mathcal{O}}$ denote the implementations satisfying the structural, behavioral, and operational invariant classes, respectively. The admissible implementation set induced by $\mathcal{P}$ is
\[
\mathcal{I}_{\mathcal{P}}
=
\mathcal{I}_{\mathcal{S}}
\cap
\mathcal{I}_{\mathcal{B}}
\cap
\mathcal{I}_{\mathcal{O}}.
\]

Thus, PDD models software construction as constrained search from plausible implementations to admissible implementations. The protocol does not define a single program; it defines the region of implementation space within which candidate programs are admitted.

\subsection{Observation-Model Semantics}

The satisfaction relation $I \models \mathcal{P}$ is observational rather than intensional. Compliance is not defined by the private internal organization of $I$, but by the observations that $\mathcal{P}$ makes relevant.

\begin{definition}[Protocol Observation Model]
For protocol $\mathcal{P}$, let $\Omega_{\mathcal{P}}$ be the set of protocol-relevant observations and let $\Phi_{\mathcal{P}}$ be the predicate that characterizes admissible observations. For implementation $I$, let $\mathrm{Obs}_{\mathcal{P}}(I) \subseteq \Omega_{\mathcal{P}}$ denote the observations of $I$ visible under the protocol's observation model.
\end{definition}

The observation set includes the protocol-relevant structural, behavioral, and operational observations: schema conformance and serialization behavior; outputs, errors, invariants, and temporal relations; and resource use, side effects, external calls, and capability boundaries. The protocol predicate decomposes as
\[
\Phi_{\mathcal{P}}(\omega)
\iff
\Phi_{\mathcal{S}}(\omega)
\land
\Phi_{\mathcal{B}}(\omega)
\land
\Phi_{\mathcal{O}}(\omega),
\]
where each component corresponds to one invariant class.

\begin{definition}[Protocol Satisfaction]
An implementation $I$ satisfies protocol $\mathcal{P}$, written $I \models \mathcal{P}$, if and only if
\[
\forall \omega \in \mathrm{Obs}_{\mathcal{P}}(I),\quad \Phi_{\mathcal{P}}(\omega).
\]
\end{definition}

This definition permits heterogeneous implementations. A Python service, a Rust binary, and a generated WebAssembly module can all satisfy the same protocol when their protocol-visible observations satisfy the same structural, behavioral, and operational predicates. Outside the protocol's commitment surface, they need not be identical.

\subsection{Validators and Evidence-Carrying Admission}

A validator is the mechanism that decides whether a candidate enters $\mathcal{I}_{\mathcal{P}}$. We write
\[
\mathrm{Validate}(I,\mathcal{P}) \in \mathcal{E} \cup \{\bot\},
\]
where $\mathcal{E}$ is the space of valid evidence objects and $\bot$ denotes rejection.

\begin{definition}[Validator Soundness]
A validator for protocol $\mathcal{P}$ is sound if, whenever $\mathrm{Validate}(I,\mathcal{P})=E$ with $E \neq \bot$, it follows that $I \models \mathcal{P}$.
\end{definition}

\begin{theorem}[Evidence-Carrying Sound Admission]
Assume a sound validator for $\mathcal{P}$ and an evidence object $E \in \mathcal{E}$ that cryptographically binds the protocol identity, implementation identity, validator identity, and validation result. If $\mathrm{Validate}(I,\mathcal{P})=E$ and $E \neq \bot$, then $I \in \mathcal{I}_{\mathcal{P}}$. Moreover, the evidence cannot be interpreted as an admission of a different implementation or a different protocol without breaking the binding assumption.
\end{theorem}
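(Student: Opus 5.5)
The proof has two parts, matching the two claims of the theorem. The first part is a direct unfolding of definitions. The second part is a reduction to the binding property assumed for $E$.

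For the first claim, suppose $\mathrm{Validate}(I,\mathcal{P})=E$ with $E\neq\bot$. By the definition of Validator Soundness, this gives $I\models\mathcal{P}$. By Protocol Compliance, or equivalently Protocol Satisfaction under the observation model, $I\models\mathcal{P}$ means $I\models\mathcal{S}$, $I\models\mathcal{B}$ and $I\models\mathcal{O}$. Hence $I\in\mathcal{I}_{\mathcal{S}}\cap\mathcal{I}_{\mathcal{B}}\cap\mathcal{I}_{\mathcal{O}}$. I will also note that this is literally the defining condition of $\mathcal{I}_{\mathcal{P}}=\{I\in\mathcal{I}\mid I\models\mathcal{P}\}$. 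The one detail to check is that $I\in\mathcal{I}$. This holds because validation is only applied to candidates drawn from the generator's universe $\mathcal{I}$.

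For the second claim, I will first make the binding assumption precise. I model $E=H(\mathcal{P},I,V,R,t)$ as a commitment or signature, and I take the binding assumption to be the following. No efficient party can produce a tuple $(\mathcal{P}',I',V',R',t')\neq(\mathcal{P},I,V,R,t)$ with $H(\mathcal{P}',I',V',R',t')=E$. This is collision resistance of the digest, or unforgeability of the validator's signature on a different message. I then define what it means to interpret $E$ as an admission: a verifier checks $E$ against a claimed tuple $(\mathcal{P}',I',\dots)$ and accepts. Now suppose $E$ were accepted as evidence for $I'\neq I$ or for $\mathcal{P}'\neq\mathcal{P}$. Then the claimed tuple differs from the original in at least one coordinate, yet it opens the same $E$. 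That is exactly a collision, or a forgery. The contrapositive gives the claim: $E$ can only be read as evidence for the pair $(I,\mathcal{P})$ that actually passed the sound validator. Combining this with the first part, any accepted reading of $E$ certifies membership in the correct admissible set.

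The main obstacle is the second part. The difficulty is stating the binding assumption precisely enough that the conclusion is a genuine reduction rather than a restatement. I would first state binding as collision resistance over the full tuple, or as EUF-CMA security of the validator's signing key. Then the argument is a one-line reduction: an interpretation adversary yields a collision or forgery adversary. I would also remark on the scope of the guarantee. It is conditional on validator soundness, which covers the observation boundary only, and on the security of the cryptographic primitive. It does not say that validation is complete.
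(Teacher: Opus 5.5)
Your proposal is correct and follows the paper's proof: validator soundness gives $I \models \mathcal{P}$, which is by definition membership in $\mathcal{I}_{\mathcal{P}}$, and the second claim rests on the binding assumption on $E$. The only difference is that the paper simply appeals to that binding assumption, while you make it precise (as collision resistance or EUF-CMA unforgeability) and phrase the second claim as a reduction; this adds rigor but is not a different route.
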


\begin{proof}[Proof]
Since the validator is sound and returns $E \neq \bot$, we have $I \models \mathcal{P}$. By the definition of the admissible implementation set, $I \in \mathcal{I}_{\mathcal{P}}$. The second claim follows from the assumed binding of $E$ to the protocol identity, implementation identity, validator identity, and validation result.
\end{proof}

The theorem is intentionally conditional. If the validator is unsound, if the protocol omits a relevant property, or if the evidence binding is compromised, PDD does not provide the stated admission guarantee. The formal claim is that evidence-bearing admission is meaningful only relative to a specified protocol and a sound validation boundary.

\subsection{Protocol-Respecting Clients}

Substitutability in PDD is not full observational equivalence; it is equivalence relative to the guarantees exposed by the protocol. Let $G(\mathcal{P})$ denote the set of guarantees entailed by $\mathcal{P}$ under its observation model.

\begin{definition}[Protocol-Respecting Client]
A client $C$ is protocol-respecting with respect to $\mathcal{P}$ if every assumption $C$ makes about a component is contained in $G(\mathcal{P})$. Such a client is permitted to rely on the protocol's structural, behavioral, and operational guarantees, but not on implementation internals or behavior outside the protocol's commitment surface.
\end{definition}

\begin{theorem}[Protocol-Level Substitutability]
Let $I_a,I_b \in \mathcal{I}_{\mathcal{P}}$. For any client $C$ that is protocol-respecting with respect to $\mathcal{P}$, replacing $I_a$ with $I_b$ preserves every client obligation that depends only on $G(\mathcal{P})$.
\end{theorem}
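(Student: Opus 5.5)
The argument is a direct unfolding of definitions, in the same style as the proof of the Evidence-Carrying Sound Admission theorem. First I would make precise what it means for a client obligation to ``depend only on $G(\mathcal{P})$'': such an obligation $\psi$ is discharged for an implementation $I$ whenever every guarantee $g \in G(\mathcal{P})$ holds of $I$. I would then read each guarantee $g \in G(\mathcal{P})$ as a property entailed by $\mathcal{P}$ under its observation model. That is, for every implementation $I$, if $\forall \omega \in \mathrm{Obs}_{\mathcal{P}}(I).\ \Phi_{\mathcal{P}}(\omega)$, then $g$ holds of $I$. This reading is the intended meaning of ``entailed'', and I would state it explicitly as the semantics of $G(\mathcal{P})$.

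The key steps, in order, are as follows.
\begin{enumerate}
\item From $I_a, I_b \in \mathcal{I}_{\mathcal{P}}$ and the definition of the admissible implementation set, obtain $I_a \models \mathcal{P}$ and $I_b \models \mathcal{P}$.
\item By the Protocol Satisfaction definition, every observation in $\mathrm{Obs}_{\mathcal{P}}(I_b)$ satisfies $\Phi_{\mathcal{P}}$. Equivalently, by the decomposition of $\Phi_{\mathcal{P}}$, it satisfies $\Phi_{\mathcal{S}}$, $\Phi_{\mathcal{B}}$ and $\Phi_{\mathcal{O}}$.
\item By the entailment reading of $G(\mathcal{P})$, every $g \in G(\mathcal{P})$ holds of $I_b$, exactly as it holds of $I_a$.
\item Let $\psi$ be any client obligation of a protocol-respecting client $C$ that depends only on $G(\mathcal{P})$. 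By the Protocol-Respecting Client definition, every assumption $C$ makes lies in $G(\mathcal{P})$. Hence $C$'s premises for $\psi$ remain true after the replacement, and $\psi$ is preserved.
\end{enumerate}
Nothing about $I_a$ is used beyond membership in $\mathcal{I}_{\mathcal{P}}$. The argument is therefore symmetric, and it also shows that it does not matter which compliant implementation was originally in place.

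The main obstacle is the first step: fixing a semantics for ``obligation depends only on $G(\mathcal{P})$'' that is strong enough to carry the argument but not vacuous. The paper leaves $G(\mathcal{P})$ informal. I would model the client's view of a component as the set of guarantee-formulas it holds of, and I would call $\psi$ guarantee-dependent if its truth is a monotone function of that set, namely implied by the conjunction of the guarantees it cites.

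Under this model, preservation follows immediately, because $I_b$ satisfies all of $G(\mathcal{P})$. I would add a remark, parallel to the one after the previous theorem, that the result is conditional. It says nothing about clients that rely on behavior outside the commitment surface, such as timing below the latency bound or ordering the protocol leaves unspecified. Such clients violate the protocol-respecting hypothesis, and $I_a$ and $I_b$ are not required to agree outside $\mathrm{Obs}_{\mathcal{P}}$.
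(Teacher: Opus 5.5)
Your proposal is correct and takes the same route as the paper's proof: membership of both implementations in $\mathcal{I}_{\mathcal{P}}$ yields every guarantee in $G(\mathcal{P})$, and the protocol-respecting hypothesis confines the client's assumptions to $G(\mathcal{P})$, so the replacement preserves them. The only difference is that you state explicitly the entailment semantics of $G(\mathcal{P})$ and of ``depends only on $G(\mathcal{P})$''; the paper leaves this implicit in its step ``both satisfy $\mathcal{P}$ and therefore satisfy the guarantees in $G(\mathcal{P})$''.
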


\begin{proof}[Proof]
Since $I_a,I_b \in \mathcal{I}_{\mathcal{P}}$, both satisfy $\mathcal{P}$ and therefore satisfy the guarantees in $G(\mathcal{P})$. Because $C$ is protocol-respecting, its component-facing assumptions are limited to $G(\mathcal{P})$. Replacing $I_a$ with $I_b$ therefore preserves all assumptions on which $C$ is permitted to depend.
\end{proof}

This result does not claim that $I_a$ and $I_b$ are identical in latency, internal state layout, dependency choices, or behavior that the protocol leaves unspecified. Such differences are admissible precisely when they lie outside the protocol's commitment surface.

\begin{theorem}[Safe Regeneration Under Protocol-Respecting Dependency]
Let $I_{\mathrm{old}}$ be an implementation admitted for protocol $\mathcal{P}$ by a sound validator, and let $I_{\mathrm{new}}$ be a regenerated implementation such that $I_{\mathrm{new}} \in \mathcal{I}_{\mathcal{P}}$. If all downstream dependencies are protocol-respecting with respect to $\mathcal{P}$, then replacing $I_{\mathrm{old}}$ with $I_{\mathrm{new}}$ preserves all downstream obligations expressible in $G(\mathcal{P})$.
\end{theorem}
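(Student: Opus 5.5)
The plan is to reduce the statement to the two preceding results, Evidence-Carrying Sound Admission and Protocol-Level Substitutability, applied once for each downstream dependency.

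\emph{Step 1.} Place $I_{\mathrm{old}}$ in $\mathcal{I}_{\mathcal{P}}$. The hypothesis says $I_{\mathrm{old}}$ was admitted for $\mathcal{P}$ by a sound validator. So there is some $E \neq \bot$ with $\mathrm{Validate}(I_{\mathrm{old}},\mathcal{P}) = E$. Validator Soundness then gives $I_{\mathrm{old}} \models \mathcal{P}$, and the Evidence-Carrying Sound Admission theorem gives $I_{\mathrm{old}} \in \mathcal{I}_{\mathcal{P}}$. The hypothesis already gives $I_{\mathrm{new}} \in \mathcal{I}_{\mathcal{P}}$. We therefore have $I_{\mathrm{old}}, I_{\mathrm{new}} \in \mathcal{I}_{\mathcal{P}}$, which is exactly the premise of Protocol-Level Substitutability.

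\emph{Step 2.} Fix an arbitrary downstream dependency $C$. By hypothesis $C$ is protocol-respecting with respect to $\mathcal{P}$. Apply the Protocol-Level Substitutability theorem with $I_a = I_{\mathrm{old}}$ and $I_b = I_{\mathrm{new}}$. The swap then preserves every obligation of $C$ that depends only on $G(\mathcal{P})$. Since $C$ was arbitrary, this holds for every downstream dependency at once. That yields the conclusion for all obligations expressible in $G(\mathcal{P})$.

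The step I expect to need care is matching the phrase ``obligations expressible in $G(\mathcal{P})$'' with ``client obligations that depend only on $G(\mathcal{P})$'' from the substitutability theorem. I would read an obligation as expressible in $G(\mathcal{P})$ when its truth is determined by which guarantees in $G(\mathcal{P})$ the component satisfies. Both $I_{\mathrm{old}}$ and $I_{\mathrm{new}}$ satisfy all of $G(\mathcal{P})$, so such an obligation takes the same value under either one.

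A second subtlety is a downstream dependency that is itself a component with its own downstream clients. The argument does not need to chain through these. Each dependency's obligations toward the replaced component are handled directly in Step 2, with no induction over the dependency graph. It is still worth stating that we only claim preservation of obligations about the replaced component, not obligations that depend on transitively propagated unspecified behavior.

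Finally, I would note that soundness is used only for $I_{\mathrm{old}}$. Membership of $I_{\mathrm{new}}$ is assumed outright. In practice that membership would come from re-running a sound validator in the Validator Loop, and the same Step 1 argument would supply it.
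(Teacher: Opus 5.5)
Your proposal is correct and follows the paper's proof: sound admission places $I_{\mathrm{old}}$ in $\mathcal{I}_{\mathcal{P}}$, $I_{\mathrm{new}} \in \mathcal{I}_{\mathcal{P}}$ is given by hypothesis, and Protocol-Level Substitutability is then applied to each protocol-respecting downstream dependency. Your remarks on reading ``expressible in $G(\mathcal{P})$'' and on transitive dependencies are sensible clarifications, but the paper's argument does not require them.
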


\begin{proof}[Proof]
Because $I_{\mathrm{old}}$ is admitted under $\mathcal{P}$ by a sound validator, evidence-carrying sound admission gives $I_{\mathrm{old}} \in \mathcal{I}_{\mathcal{P}}$. By assumption, $I_{\mathrm{new}} \in \mathcal{I}_{\mathcal{P}}$. Applying protocol-level substitutability to each protocol-respecting downstream dependency shows that replacement preserves every dependency obligation whose assumptions are contained in $G(\mathcal{P})$.
\end{proof}

Safe regeneration is therefore not a claim about arbitrary clients. It holds only for dependencies whose assumptions are constrained to the protocol. If a downstream component depends on accidental behavior of the old implementation, regeneration may expose that invalid coupling.

\subsection{Protocol Refinement}

Protocol refinement evolves protocols by adding constraints. We write $\mathcal{P}' \succeq \mathcal{P}$ when $\mathcal{P}'$ preserves every constraint of $\mathcal{P}$ and adds further structural, behavioral, or operational constraints. Equivalently, every observation admitted by $\mathcal{P}'$ is also admitted by $\mathcal{P}$.

\begin{proposition}[Refinement Narrows Admissibility]
If $\mathcal{P}' \succeq \mathcal{P}$, then
\[
\mathcal{I}_{\mathcal{P}'} \subseteq \mathcal{I}_{\mathcal{P}}.
\]
\end{proposition}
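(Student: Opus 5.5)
The plan is a direct element-chasing argument that unfolds the observation-model semantics of satisfaction. I take an arbitrary $I \in \mathcal{I}_{\mathcal{P}'}$ and show $I \in \mathcal{I}_{\mathcal{P}}$. By the definition of the admissible implementation set, $I \in \mathcal{I}_{\mathcal{P}'}$ means $I \models \mathcal{P}'$. By Protocol Satisfaction this gives $\Phi_{\mathcal{P}'}(\omega)$ for every $\omega \in \mathrm{Obs}_{\mathcal{P}'}(I)$. The target is $\Phi_{\mathcal{P}}(\omega)$ for every $\omega \in \mathrm{Obs}_{\mathcal{P}}(I)$.

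Two facts about refinement are needed, and I would extract both from the definition of $\mathcal{P}' \succeq \mathcal{P}$.

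\textbf{Predicate inclusion.} The paper's equivalent formulation states that every observation admitted by $\mathcal{P}'$ is also admitted by $\mathcal{P}$. I read this as $\Phi_{\mathcal{P}'}(\omega) \Rightarrow \Phi_{\mathcal{P}}(\omega)$. Componentwise it says that $\Phi_{\mathcal{S}'}\land\Phi_{\mathcal{B}'}\land\Phi_{\mathcal{O}'}$ entails $\Phi_{\mathcal{S}}\land\Phi_{\mathcal{B}}\land\Phi_{\mathcal{O}}$, because $\mathcal{P}'$ keeps every conjunct of $\mathcal{P}$ and only adds further ones.

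\textbf{Observation coverage.} Refinement only adds constraints and never removes an observation channel. So I would take $\mathrm{Obs}_{\mathcal{P}}(I) \subseteq \mathrm{Obs}_{\mathcal{P}'}(I)$: whatever $\mathcal{P}$ makes relevant, $\mathcal{P}'$ also observes.

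With both facts the chain is immediate. Let $\omega \in \mathrm{Obs}_{\mathcal{P}}(I)$. Coverage gives $\omega \in \mathrm{Obs}_{\mathcal{P}'}(I)$, satisfaction of $\mathcal{P}'$ gives $\Phi_{\mathcal{P}'}(\omega)$, and predicate inclusion gives $\Phi_{\mathcal{P}}(\omega)$. Hence $I \models \mathcal{P}$, so $I \in \mathcal{I}_{\mathcal{P}}$.

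As a cross-check I would also give the coarser set-level version using the decomposition $\mathcal{I}_{\mathcal{P}} = \mathcal{I}_{\mathcal{S}}\cap\mathcal{I}_{\mathcal{B}}\cap\mathcal{I}_{\mathcal{O}}$. Refinement makes each class-wise set shrink, for example $\mathcal{I}_{\mathcal{S}'} \subseteq \mathcal{I}_{\mathcal{S}}$, and intersection is monotone, which yields the same inclusion.

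The main obstacle is the observation-coverage step. The paper defines $\succeq$ only informally, and its stated equivalent speaks only about admitted observations, not about the observation sets themselves. If $\mathcal{P}'$ could observe less than $\mathcal{P}$, some $\mathcal{P}$-relevant observation might go unchecked by $\mathcal{P}'$, and the inclusion could fail. I would address this by stating explicitly, as part of unpacking ``preserves every constraint of $\mathcal{P}$,'' that preserving a constraint includes preserving the observations it ranges over. Alternatively, I would fix a common observation space and define $\mathrm{Obs}$ independently of the protocol, which makes coverage trivial. Either reading should be flagged as the formal content of $\succeq$; the rest is routine monotonicity of conjunction and of universal quantification.
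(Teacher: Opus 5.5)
Your proposal is correct and matches the paper's proof, which uses the same element-chasing argument: $I \models \mathcal{P}'$ gives $\Phi_{\mathcal{P}'}$ on all observations of $I$, preservation of $\mathcal{P}$'s constraints then gives $\Phi_{\mathcal{P}}$ on those observations, and so $I \models \mathcal{P}$. The paper treats ``all observations of $I$'' as a single set, so your condition $\mathrm{Obs}_{\mathcal{P}}(I) \subseteq \mathrm{Obs}_{\mathcal{P}'}(I)$ is precisely the assumption its proof leaves implicit, and stating it as part of the formal content of $\succeq$ is a tightening of the same argument rather than a different route.
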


\begin{proof}[Proof]
For any implementation $I$, if $I \models \mathcal{P}'$, then all observations of $I$ satisfy the stronger predicate $\Phi_{\mathcal{P}'}$. Since $\mathcal{P}'$ preserves every constraint of $\mathcal{P}$, those observations also satisfy $\Phi_{\mathcal{P}}$, so $I \models \mathcal{P}$. Hence every implementation in $\mathcal{I}_{\mathcal{P}'}$ is also in $\mathcal{I}_{\mathcal{P}}$.
\end{proof}

Refinement captures the governance effect of protocol evolution. Adding constraints can exclude previously admissible implementations, but it cannot enlarge the admissible set unless the protocol is weakened rather than refined.

\subsection{Summary}

The formal role of PDD is to separate implementation search from artifact admission. Protocols define an observation boundary, validators provide conditional evidence that a candidate lies within that boundary, protocol-respecting clients can substitute or regenerate implementations without depending on incidental behavior, and refinement monotonically narrows the admissible implementation set. These results make precise the central claim of the paper: code is transient, while the protocol is the durable representation of software governance.

\section{Reference Architecture}

The reference architecture identifies minimal components for separating design intent, implementation search, validation, runtime enforcement, and evidence preservation without prescribing an implementation stack.

\subsection{Architectural Overview}

A minimal PDD system consists of seven principal components:

\begin{enumerate}
    \item \textbf{Protocol Author}, which authors and evolves protocols;
    \item \textbf{Protocol Registry}, which stores versioned protocol bundles;
    \item \textbf{Implementation Generator}, which searches for candidate realizations;
    \item \textbf{Validation Engine}, which verifies protocol compliance;
    \item \textbf{Runtime Verification Layer}, which observes and enforces protocol invariants during live execution;
    \item \textbf{Evidence Store}, which preserves Discovery Logs, Evidence Chain artifacts, and Dynamic Evidence Ledger blocks;
    \item \textbf{Remediation Orchestrator}, which converts runtime violations into repair contexts for renewed generation and validation.
\end{enumerate}

The components form a closed governance loop: protocols define admissibility, generators search, validators decide admission, runtime verifiers monitor deployed behavior, evidence stores record the basis of each judgment, and remediation returns failures to the generation pipeline.

The artifact sketch in Appendix~\ref{app:minimal-pdd-bundle} connects these roles to concrete protocol files, validator declarations, and evidence records without treating the sketch as a deployed prototype.

\subsection{Protocol Author}

The \emph{Protocol Author} translates intent into structural, behavioral, and operational invariants; negotiates dependencies; resolves compatibility conflicts; and versions protocol bundles. The role can be human, automated, or hybrid. Instead of prescribing one implementation, it defines the invariant boundary within which implementations are discovered and substituted.

\subsection{Protocol Registry}

The \emph{Protocol Registry} stores versioned protocol bundles: typed handshakes, behavioral invariants, capability manifests, dependency declarations, validator requirements, and provenance metadata. It is the durable system of record for protocol evolution.

\subsection{Implementation Generator}

The \emph{Implementation Generator} produces candidate realizations of a protocol bundle using generative models, synthesis, templates, or other search strategies. It operates outside the admission boundary: its outputs are proposals, and it has no authority to declare them valid.

\subsection{Validation Engine}

The \emph{Validation Engine} evaluates candidate artifacts against the governing protocol for structural, behavioral, and operational compliance. As required by the protocol, it invokes compilers, property-based testing, symbolic analyzers, sandbox monitors, policy engines, or provenance tooling. Generation proposes; validation decides.

\subsection{Runtime Verification Layer}

The \emph{Runtime Verification Layer} observes deployed executions from outside the generated implementation. It checks live payloads, state transitions, dependency calls, and resource measurements against the monitorable subset of the same structural, behavioral, and operational invariant classes used for admission. When an execution violates the protocol, the layer can block, quarantine, rate-limit, roll back, or emit a violation block for the Dynamic Evidence Ledger.

\subsection{Evidence Store}

The \emph{Evidence Store} preserves artifacts required to audit acceptance decisions and replay them when validator inputs are available: Discovery Logs, signed evidence objects, validator traces, implementation hashes, provenance records, runtime attestation blocks, and violation reports.

\subsection{Remediation Orchestrator}

The \emph{Remediation Orchestrator} turns runtime violation evidence into a structured repair context. It identifies the violated protocol clause, gathers relevant telemetry and environment metadata, and returns the failure to the Implementation Generator. The resulting patch or regenerated artifact is treated as a new candidate and must pass the Validation Engine before deployment.

\subsection{End-to-End Flow}

The end-to-end flow proceeds as follows:

\begin{enumerate}
    \item The Protocol Author defines and seals a protocol bundle.
    \item The Protocol Registry publishes the bundle as the authoritative component contract.
    \item The Implementation Generator retrieves the bundle and generates candidate realizations.
    \item The Validation Engine evaluates each candidate against the protocol.
    \item Upon successful validation, the system emits a Discovery Log and a signed evidence object.
    \item The Evidence Store records the resulting artifacts as part of the Evidence Chain.
    \item Approved implementations are released, deployed, or made available for substitution.
    \item The Runtime Verification Layer monitors deployed behavior and appends attestation blocks to the Dynamic Evidence Ledger.
    \item Runtime violations are converted into repair contexts, and any proposed fix re-enters validation before replacement.
\end{enumerate}

This treats software construction and operation as a continuous governance workflow.

\subsection{Trust Boundaries and Interoperability}

The reference architecture defines explicit admission and enforcement boundaries: generators and code artifacts are untrusted by default, protocols, validators, runtime verifiers, and evidence stores form the control plane, and deployment systems are conditionally trusted to verify evidence before runtime admission.

The architecture augments existing toolchains. Typed handshakes map to OpenAPI and Protocol Buffers, behavioral invariants integrate with property-based testing and formal verification, operational invariants use policy engines, runtime verifiers can be implemented with sidecars, gateways, service meshes, sandbox monitors, or kernel instrumentation, and Evidence Chains align with supply-chain frameworks such as SLSA.

\subsection{Summary}

The reference architecture decomposes software construction and operation into explicit roles: authors define admissibility, registries preserve protocols, generators search, validators establish compliance, runtime verifiers enforce deployed invariants, remediation components return failures to generation, and evidence stores preserve the basis for each admission and attestation decision.

\section{Runtime Evidence and Continuous Attestation}
\label{sec:runtime-evidence}

The Validator Loop establishes that a generated artifact is admissible at the time it enters the system. That claim is necessary but not sufficient for production governance. A deployed implementation may later encounter inputs, traffic patterns, dependency states, or concurrency schedules absent from pre-deployment validation. PDD therefore extends admission evidence into runtime evidence: the protocol remains the governing artifact, but evidence continues to accumulate after deployment.

Runtime evidence is necessarily scoped by observability. Some protocol obligations can be checked online from payloads, traces, resource counters, policy decisions, or state-transition events. Other obligations may require offline replay, sampling, formal analysis, or stronger instrumentation. Continuous attestation therefore covers the monitorable projection of the protocol rather than replacing build-time validation.

\subsection{The Static Evidence Boundary}

Build-time validation creates a controlled observation boundary. Validators can compile artifacts, execute property suites, inspect dependencies, run sandboxed workloads, and measure resource use under known conditions. These checks support disciplined admission, but they cannot exhaust the space of production executions. Live systems exhibit workload drift, partial failures, adversarial inputs, scheduler interleavings, dependency changes, and gradual resource degradation.

The limitation is not a defect in the Evidence Chain; it is a boundary condition. A development-time evidence object proves compliance relative to preserved validation inputs, validator assumptions, and the protocol observation model. It should not be read as a perpetual guarantee that every future execution will remain compliant.

\subsection{Dynamic Evidence Ledgers}

To represent compliance across the operational lifetime of a component, PDD generalizes the Evidence Chain into a \emph{Dynamic Evidence Ledger}. The initial ledger element records build-time admission:
\[
\mathcal{L}_0 = (E_{\mathrm{build}}).
\]
At runtime, each attestation interval produces a signed evidence block:
\[
E_t =
H(E_{t-1}, \mathcal{P}, I_v, R_t, A_t, t),
\]
where $E_{t-1}$ is the previous evidence block, $\mathcal{P}$ is the governing protocol, $I_v$ identifies the deployed implementation version, $R_t$ contains runtime observations over interval $t$, $A_t$ records the attestation decision, and $H$ denotes a cryptographic digest or signed attestation. The ledger evolves append-only:
\[
\mathcal{L}_t = \mathcal{L}_{t-1} \Vert E_t.
\]

The protocol itself remains the triplet
\[
\mathcal{P} = (\mathcal{S}, \mathcal{B}, \mathcal{O}).
\]
The deployed instance is time-indexed by implementation and evidence state:
\[
\mathcal{D}_t = (\mathcal{P}, I_v, \mathcal{L}_t).
\]
This distinction preserves the protocol as the durable specification while allowing operational evidence to evolve.

\begin{definition}[Monitorable Runtime Projection]
For protocol $\mathcal{P}$, let $\Omega_{\mathcal{P}}^r \subseteq \Omega_{\mathcal{P}}$ denote the protocol-relevant observations available to the runtime verifier. The runtime predicate $\Phi_{\mathcal{P}}^r$ is the restriction of $\Phi_{\mathcal{P}}$ to $\Omega_{\mathcal{P}}^r$.
\end{definition}

\begin{definition}[Continuous Protocol Attestation]
For protocol $\mathcal{P}$ and deployed implementation version $I_v$, continuous attestation over interval $t$ succeeds when every runtime observation in $R_t \subseteq \Omega_{\mathcal{P}}^r$ satisfies $\Phi_{\mathcal{P}}^r$, and the resulting decision is appended to $\mathcal{L}_t$ as signed evidence.
\end{definition}

Continuous attestation therefore changes the interpretation of compliance from a one-time admission event to a sequence of recorded claims over monitorable behavior. A component is not merely accepted once; it remains accountable to the protocol while it executes.

\subsection{Runtime Verification Layer}

Runtime evidence requires an enforcement boundary outside the generated implementation. We call this boundary the \emph{Runtime Verification Layer} (RVL). It can be realized through sidecars, service mesh policies, gateways, admission controllers, eBPF instrumentation, sandbox monitors, or other deployment mechanisms. The key property is isolation: the generated implementation may produce behavior, but it cannot unilaterally decide whether that behavior is compliant.

The RVL performs three functions:

\begin{enumerate}
    \item \textbf{Runtime observation:} collect protocol-relevant payloads, state transitions, resource measurements, dependency calls, and policy decisions.
    \item \textbf{Runtime enforcement:} block, quarantine, rate-limit, roll back, or degrade executions that violate structural, behavioral, or operational invariants.
    \item \textbf{Runtime attestation:} append signed evidence blocks to the Dynamic Evidence Ledger, including both successful attestations and violations.
\end{enumerate}

This layer is aligned with runtime verification: executions are monitored against formal or executable properties while the system runs~\cite{leucker2009runtime,havelund2004jpx}. PDD uses that pattern as part of a generated-software governance loop rather than as an isolated monitoring technique.

\begin{definition}[Runtime Verifier Soundness]
A runtime verifier for $\mathcal{P}$ is sound with respect to $\Phi_{\mathcal{P}}^r$ if every emitted pass block corresponds to observations satisfying $\Phi_{\mathcal{P}}^r$, and every emitted violation block identifies at least one observed $\omega \in \Omega_{\mathcal{P}}^r$ such that $\neg \Phi_{\mathcal{P}}^r(\omega)$.
\end{definition}

\begin{proposition}[Evidence-Carrying Runtime Violation]
Assume a sound runtime verifier for $\mathcal{P}$ and a violation block $E_t^{\mathrm{fail}}$ that cryptographically binds the deployed implementation version, runtime observation, protocol identity, verifier identity, and violated predicate. Then $E_t^{\mathrm{fail}}$ is evidence of non-compliance with the monitorable runtime projection of $\mathcal{P}$ for the recorded interval.
\end{proposition}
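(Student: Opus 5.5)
The plan mirrors the proof of Evidence-Carrying Sound Admission, with runtime verifier soundness in place of validator soundness and the violation clause in place of the pass clause. First, unpack what ``evidence of non-compliance with the monitorable runtime projection'' must mean. Restricting the protocol satisfaction definition to $\Omega_{\mathcal{P}}^r$ (via the Monitorable Runtime Projection definition), the deployed version $I_v$ complies with the projection over interval $t$ iff every $\omega \in R_t \subseteq \Omega_{\mathcal{P}}^r$ observed for $I_v$ satisfies $\Phi_{\mathcal{P}}^r(\omega)$. Non-compliance for the interval therefore means there is some $\omega \in R_t$ with $\neg\Phi_{\mathcal{P}}^r(\omega)$. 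The goal reduces to showing that $E_t^{\mathrm{fail}}$ guarantees such an $\omega$ exists and is attributable to the bound version, protocol, verifier, and interval.

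Second, invoke Runtime Verifier Soundness. Since the verifier is sound and emitted a violation block, that block identifies an observed $\omega \in \Omega_{\mathcal{P}}^r$ with $\neg\Phi_{\mathcal{P}}^r(\omega)$. The universal condition of Continuous Protocol Attestation therefore fails for that interval, which is exactly non-compliance with the runtime projection. I would also note explicitly that only the violation half of soundness is used, and that the conclusion is existential. This is why it needs no completeness assumption about runtime observability.

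Third, use the binding assumption to rule out misattribution. The block cryptographically binds $I_v$, the observation $\omega$, the protocol identity $\mathcal{P}$, the verifier identity, and the violated predicate. So, absent a break of the binding, the block cannot be reinterpreted as concerning a different implementation version, a different protocol, or a different observation. This is the same move as the second clause of Evidence-Carrying Sound Admission, and I would cite it as the parallel argument.

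The main obstacle is not technical depth but precision about scope. I must make clear that the conclusion concerns only the monitorable projection $\Phi_{\mathcal{P}}^r$ and the recorded interval. The claim does not say $I_v \not\models \mathcal{P}$ in full generality, although that would in fact follow: $\Phi_{\mathcal{P}}^r$ is a restriction of $\Phi_{\mathcal{P}}$ and $\omega \in \mathrm{Obs}_{\mathcal{P}}(I_v)$. That stronger observation requires assuming the runtime observations are genuine observations of $I_v$ under the protocol's observation model. I would state it as a remark rather than part of the proposition, consistent with the paper's conditional style.
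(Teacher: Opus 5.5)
Your proposal is correct and follows the paper's proof essentially step for step. Runtime verifier soundness (the violation clause) gives an observed $\omega \in \Omega_{\mathcal{P}}^r$ with $\neg\Phi_{\mathcal{P}}^r(\omega)$, and the cryptographic binding rules out reattributing the block to a different implementation, protocol, verifier, or observation, just as the paper argues; your explicit unpacking of non-compliance as failure of the universal condition in Continuous Protocol Attestation, and your remark that $I_v \not\models \mathcal{P}$ would additionally need the runtime observations to lie in $\mathrm{Obs}_{\mathcal{P}}(I_v)$, are sound refinements the paper leaves implicit.
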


\begin{proof}[Proof]
By runtime verifier soundness, an emitted violation block identifies an observed $\omega \in \Omega_{\mathcal{P}}^r$ for which $\neg \Phi_{\mathcal{P}}^r(\omega)$. The cryptographic binding prevents the block from being reassigned to a different implementation, protocol, verifier, or observation without breaking the binding assumption. The block therefore records non-compliance with the monitorable projection of $\mathcal{P}$ for that interval.
\end{proof}

\subsection{Runtime Anomaly Taxonomy}

Runtime attestation is motivated by anomaly classes that are difficult to eliminate through pre-deployment validation alone:

\begin{itemize}
    \item \textbf{Structural drift:} live payloads, schema versions, or serialization behavior diverge from the typed handshake.
    \item \textbf{Behavioral drift:} observed outputs, error semantics, ordering guarantees, or monitorable idempotence properties diverge from $\mathcal{B}$.
    \item \textbf{Operational degradation:} latency, memory use, retries, cost, or dependency fan-out exceed $\mathcal{O}$ under production load.
    \item \textbf{Temporal and concurrency anomalies:} races, ordering violations, deadlocks, livelocks, or stale reads appear only under particular schedules.
    \item \textbf{Dependency and environment drift:} external services, package versions, configuration, or infrastructure state change after admission.
    \item \textbf{Authority and provenance violations:} an implementation attempts unapproved network access, file writes, privilege escalation, or unrecorded dependency use.
    \item \textbf{Distribution and adversarial shift:} production inputs exploit assumptions not represented in validation workloads.
\end{itemize}

These classes do not replace protocol authoring; they identify where protocols and validators may need refinement. A recurring anomaly can become a new structural, behavioral, or operational invariant in a later protocol version.

\subsection{Closed-Loop Remediation}

When runtime attestation fails, the RVL should not merely emit an alert. It should produce a structured invariant failure context that can be used by the generation and validation pipeline. Evidence blocks may store redacted observations, hashes, or pointers to access-controlled traces when raw telemetry contains sensitive data. Let
\[
E_t^{\mathrm{fail}} =
H(E_{t-1}, \mathcal{P}, I_v, R_t^{\mathrm{fail}}, A_t^{\mathrm{fail}}, t)
\]
denote a signed violation block. The remediation loop is:
\[
E_t^{\mathrm{fail}}
\rightarrow
C_t
\rightarrow
I'
\rightarrow
\mathrm{Validate}(I',\mathcal{P})
\rightarrow
\mathcal{L}_{t+1},
\]
where $C_t$ is a repair context derived from the violating observation, protocol clause, environment metadata, and ledger state. A generator may use $C_t$ to propose a patch $I'$, but the patch is not trusted because it was generated in response to a real failure. It must re-enter the Validator Loop and produce fresh admission evidence before deployment.

This closes the self-correction loop. Runtime failure becomes protocol-indexed evidence, not an informal bug report. The generator receives concrete proof of non-compliance, while the validator and RVL retain authority over admission and enforcement.

\subsection{Summary}

Dynamic Evidence Ledgers, Runtime Verification Layers, and closed-loop remediation extend PDD from static admission to continuous protocol governance. The protocol remains sovereign, the generator remains a proposal engine, and runtime execution becomes another source of signed evidence about whether observed deployed behavior remains within the monitorable protocol boundary.

\section{Case Studies}

We illustrate PDD with three examples: an idempotent handler, a bounded ETL pipeline, and an automatically generated microservice. These demonstrate how protocols govern implementations across scales.

\subsection{Idempotent User-Creation Handler}

Consider a service that creates a user account from a client-supplied identifier. In a conventional design document, the requirement might be written informally as:

\begin{quote}
``Create the user if it does not already exist, and return the existing record otherwise.''
\end{quote}

This statement leaves open required fields, duplicate recognition, admissible errors, external lookups, retries, and operational budgets.

Under PDD, these requirements are expressed as a protocol:

\begin{itemize}
    \item \textbf{Structural invariants $(\mathcal{S})$:}
    The request and response are defined by explicit schemas specifying required fields, optional metadata, and enumerated error variants.
    
    \item \textbf{Behavioral invariants $(\mathcal{B})$:}
    Repeated invocations with the same logical identifier must be idempotent:
    \[
    f(x,s)=(y,s') \Rightarrow f(x,s')=(y,s').
    \]
    Additional properties include deterministic errors and identifier uniqueness.
    
    \item \textbf{Operational invariants $(\mathcal{O})$:}
    At most one database write is permitted; no outbound network access is allowed; and end-to-end latency must remain below a specified bound.
\end{itemize}

An implementation generator is free to use optimistic insertion, uniqueness constraints, transactional lookup, or an equivalent strategy. The validation engine checks schema conformance, idempotence, error behavior, and operational limits; once admitted, the implementation can later be replaced without changing the protocol.

\subsection{Bounded ETL Pipeline}

As a second example, consider an extract-transform-load (ETL) component that normalizes transactional records before downstream analysis.

In a conventional specification, the requirement might be: ``process all valid records and reject malformed inputs.'' This leaves ordering, determinism, temporary disk usage, and memory budgets unspecified.

Under PDD, the pipeline is instead governed by a protocol that makes these constraints explicit:

\begin{itemize}
    \item \textbf{Structural invariants:}
    Input and output schemas define exact field types, nullability rules, and admissible record formats.
    
    \item \textbf{Behavioral invariants:}
    The transformation must be deterministic, schema-preserving where required, and conservative with respect to valid-record counts unless filtering or aggregation is explicitly permitted.
    
    \item \textbf{Operational invariants:}
    The pipeline must execute in streaming mode, remain within a fixed memory budget, avoid temporary disk writes, and satisfy a maximum per-record processing latency.
\end{itemize}

The implementation generator is free to realize the pipeline in Python, Rust, Apache Beam, or another framework. These choices are secondary; the generated implementation must satisfy the protocol's structural, behavioral, and operational constraints.

\subsection{Automated Microservice Generation}

The third case study illustrates the full Protocol-Driven Development lifecycle in an automated synthesis setting.

Suppose an architect introduces a fraud-detection microservice by defining a protocol bundle containing:

\begin{itemize}
    \item gRPC request and response schemas;
    \item behavioral properties such as deterministic scoring, monotonicity of risk under added evidence, and well-defined failure semantics;
    \item operational constraints such as latency budgets, approved feature stores, restricted outbound network access, and dependency on existing authentication and audit protocols.
\end{itemize}

An implementation generator retrieves the bundle and produces candidates using different prompts, libraries, or internal architectures. Each candidate is treated as a provisional proposal.

The validation engine checks interface conformance, determinism, monotonicity, regression properties, approved feature access, latency, and dependency constraints. Non-compliant candidates are rejected.

If a candidate satisfies the protocol bundle, it is admitted and recorded with its Discovery Log and signed evidence object. After deployment, a Runtime Verification Layer monitors feature-store calls, latency, response shape, and policy decisions. Suppose a production traffic shift causes the service to issue two feature-store calls for a subset of requests, violating the operational invariant that allows only one such call. The RVL records the violating trace as a signed ledger block and can rate-limit or quarantine the offending path. The remediation orchestrator converts the block into repair context for the generator; any proposed fix must pass the original Validator Loop before replacement.

Later, a new generator can also produce a more efficient realization; if the protocol is unchanged and the new artifact validates, the new version can replace the original without weakening the stated guarantees.

\subsection{Comparative Lessons}

Four properties recur: the protocol defines the stable component surface; regeneration and substitution become explicit operations; operational boundaries become first-class elements; and the same admission logic applies from functions to services.

\subsection{Implications}

The examples demonstrate incremental application of PDD, beginning with small module boundaries and extending to service architectures. PDD stabilizes what must remain stable: interface boundaries, behavioral properties, operational authority, and evidence of compliance.

\section{Evaluation Agenda}

PDD requires an empirical agenda focused on admission quality rather than code-generation speed alone. Implementations should be judged by whether protocol-governed construction yields artifacts that are less ambiguous, easier to regenerate, more substitutable, more operationally bounded, and more auditable than artifacts governed by prose and tests alone.

\subsection{Empirical Questions}

Future evaluations should organize around six questions:

\begin{enumerate}
    \item \textbf{Ambiguity Reduction:} Do protocols reduce implementation variance relative to narrative specifications and test suites alone?
    \item \textbf{Regeneration Reliability:} Can independently generated implementations repeatedly satisfy the same protocol across models, prompts, and languages?
    \item \textbf{Protocol-Level Substitutability:} Are distinct protocol-compliant implementations interchangeable for protocol-respecting clients?
    \item \textbf{Validation Overhead:} What computational and engineering costs are introduced by the Validator Loop?
    \item \textbf{Governance Efficacy:} Does protocol-based admission detect and block unauthorized side effects, capability violations, and provenance gaps?
    \item \textbf{Runtime Attestation:} Do runtime verifiers and Dynamic Evidence Ledgers detect post-deployment anomalies and produce repair evidence?
\end{enumerate}

These claims are empirically falsifiable. If protocols do not reduce protocol-visible variance, admitted implementations cannot be regenerated reliably, compliant implementations are not substitutable for protocol-respecting clients, the Validator Loop misses routine capability violations, or runtime attestation fails to detect monitorable post-deployment anomalies, then the central claims of PDD would require revision.

\subsection{Workloads and Baselines}

An empirical validation program should span multiple levels of system complexity:

\begin{itemize}
    \item \textbf{Functions:} idempotent handlers, parsers, validators, and deterministic transformations;
    \item \textbf{Data pipelines:} bounded ETL jobs with explicit memory, latency, and side-effect constraints;
    \item \textbf{Microservices:} gRPC and REST services with persistence, dependency policies, and runtime authority restrictions;
    \item \textbf{Automated regeneration tasks:} repeated implementation of the same protocol across multiple models, prompting strategies, and programming languages.
\end{itemize}

For each workload, future studies should compare at least three conditions:

\begin{enumerate}
    \item \textbf{Spec-Driven Development (SDD):} natural-language requirements and interface descriptions;
    \item \textbf{Test-Driven Development (TDD):} requirements supplemented by executable tests;
    \item \textbf{Protocol-Driven Development (PDD):} structural, behavioral, and operational invariants with validator-based admission.
\end{enumerate}

These baselines isolate the effect of treating the protocol bundle as the governing artifact for generated software.

\subsection{Ambiguity and the Natural Language Tax}

To measure ambiguity, a study should generate multiple implementations from the same design intent under SDD, TDD, and PDD conditions and compare them along three axes:

\begin{itemize}
    \item \textbf{Structural divergence:} differences in interface shape, field interpretation, schema compatibility, and version behavior;
    \item \textbf{Behavioral divergence:} differences in outputs, error semantics, determinism, and property satisfaction;
    \item \textbf{Operational divergence:} differences in external dependencies, side effects, latency, and resource usage.
\end{itemize}

These measurements make the \emph{Natural Language Tax} observable: interpretation cost appears as structural divergence, reconciliation cost as behavioral divergence, and maintenance cost as operational drift. Evidence would support PDD if protocol-visible divergence is lower than under SDD or TDD, and would weaken it if divergence remains comparable or greater despite the added protocol-authoring burden.

\subsection{Regeneration Reliability}

Future studies should test regeneration by repeatedly generating implementations for the same protocol using:

\begin{itemize}
    \item different generative models;
    \item multiple prompting strategies;
    \item varied sampling temperatures;
    \item alternative programming languages and runtime frameworks.
\end{itemize}

The relevant metrics are:

\begin{itemize}
    \item \textbf{validation pass rate,}
    \item \textbf{number of attempts to first successful admission,}
    \item \textbf{time to successful admission,}
    \item \textbf{rate of protocol-level substitutability among admitted implementations.}
\end{itemize}

Evidence would support PDD if the protocol remains a stable target across generation methods and admitted implementations remain substitutable under the protocol's observation model. It would weaken PDD if small changes in model, prompt, or language commonly produce candidates that cannot satisfy the protocol or require extensive manual repair.

\subsection{Protocol-Level Substitutability}

Substitutability should be evaluated by placing independently generated implementations of the same protocol behind the same interface boundary and measuring:

\begin{itemize}
    \item functional outputs and error semantics;
    \item client-visible timing behavior within admissible operational envelopes;
    \item downstream compatibility with dependent services;
    \item protocol-visible side effects and authority usage.
\end{itemize}

This evaluation asks whether a protocol-respecting client can replace one admitted implementation with another without relying on behavior outside the protocol. Failures indicate underspecified protocols, unsound validators, or clients coupled to implementation accidents.

\subsection{Validation Cost}

The Validator Loop adds work relative to conventional build-and-test pipelines. An empirical assessment should measure:

\begin{itemize}
    \item structural validation time;
    \item behavioral validation time, including property-based and regression checks;
    \item operational validation overhead introduced by sandboxing and instrumentation;
    \item evidence generation and recording latency;
    \item runtime attestation overhead for live monitoring;
    \item total time from candidate generation to admission decision.
\end{itemize}

The central question is whether added cost is proportionate to gains in governance, substitutability, and auditability. A negative result occurs if validation overhead dominates without improving admission quality.

\subsection{Governance Efficacy}

To evaluate governance directly, future studies should include non-compliant candidates that violate operational or provenance constraints:

\begin{itemize}
    \item unauthorized network access;
    \item hidden temporary file writes;
    \item excessive database calls;
    \item use of unapproved dependencies;
    \item latency-budget violations;
    \item missing or malformed provenance metadata;
    \item production-only races, memory leaks, dependency drift, and workload spikes.
\end{itemize}

The main metric is the fraction of violations detected and blocked by the Validator Loop before admission and by the Runtime Verification Layer after deployment. PDD is strengthened if operational and provenance violations are systematically rejected before admission and post-deployment anomalies produce signed evidence and targeted repair contexts. It is weakened if non-compliant candidates pass validation at rates comparable to conventional test-only pipelines or if deployed violations are visible in telemetry but absent from the Dynamic Evidence Ledger.

\subsection{Evidence and Reproducibility}

For every admitted implementation, the Discovery Log, Evidence Chain, and Dynamic Evidence Ledger should be assessed by whether they allow an auditor or downstream system to:

\begin{itemize}
    \item reconstruct the exact protocol version and dependency closure;
    \item verify artifact hashes and provenance metadata;
    \item replay validation decisions deterministically when the required inputs are preserved;
    \item trace the admission decision from protocol to implementation to deployment artifact;
    \item inspect runtime attestation blocks, violation evidence, and remediation outcomes.
\end{itemize}

This dimension asks whether PDD produces usable admission and runtime evidence rather than only validation reports. Strong evidence would show decisions and runtime observations that are inspectable and linked to the governing protocol; weak evidence would show an evidence record too incomplete, expensive, or fragile to support audit.

\subsection{Validation and Falsification Criteria}

The empirical agenda can be summarized as a set of falsifiable expectations:

\begin{enumerate}
    \item PDD should reduce protocol-visible ambiguity relative to SDD and TDD baselines.
    \item Regenerated implementations should repeatedly reach admission under the same protocol.
    \item Independently generated implementations should be substitutable for protocol-respecting clients.
    \item Validation overhead should be measurable and justified by governance benefits.
    \item Operational and provenance violations should be detected before admission.
    \item Runtime verification should detect anomalies that appear after deployment and produce signed evidence blocks.
    \item Evidence artifacts should support reproducible admission decisions, runtime attestation, and retrospective audit.
\end{enumerate}

Failure on any dimension would be informative: the protocol may be too weak, the validator unsound, the evidence insufficient, or authoring costs too high for the target system class.

\subsection{Summary}

This agenda specifies how PDD should be tested once implemented in concrete toolchains. To serve as authoritative development artifacts, protocols must measurably stabilize generated software, bound operational authority, and produce evidence for admission, runtime attestation, regeneration, substitution, remediation, and audit.

\section{Related Work}

Protocol-Driven Development (PDD) draws on formal verification, runtime verification, executable testing, declarative infrastructure, software supply-chain security, and automated software engineering. Its contribution is not a standalone verification primitive, but a development model in which protocols are the enduring artifact and implementations are admitted and continuously attested through validation evidence.

Table~\ref{tab:pdd-adjacent-comparison} summarizes this positioning. The adjacent approaches are complementary; PDD differs by making a protocol bundle the primary artifact that jointly governs structure, behavior, operational authority, evidence-producing admission, and runtime attestation.

\begin{table*}[t]
\centering
\caption{Protocol-Driven Development compared with adjacent approaches.}
\label{tab:pdd-adjacent-comparison}
\footnotesize
\setlength{\tabcolsep}{3pt}
\renewcommand{\arraystretch}{1.08}
\begin{tabular}{@{}>{\raggedright\arraybackslash}p{0.21\textwidth}>{\raggedright\arraybackslash}p{0.13\textwidth}>{\raggedright\arraybackslash}p{0.13\textwidth}>{\raggedright\arraybackslash}p{0.15\textwidth}>{\raggedright\arraybackslash}p{0.15\textwidth}>{\raggedright\arraybackslash}p{0.15\textwidth}@{}}
\toprule
\textbf{Approach} & \textbf{Structure} & \textbf{Behavior} & \textbf{Operational authority} & \textbf{Evidence / provenance} & \textbf{Primary artifact} \\
\midrule
Interface schemas / API descriptions & Schemas, signatures & Mostly external & Mostly external & Version metadata & Interface contract \\
TDD / property-based testing & Indirect via fixtures & Examples, properties & Usually external & Test results & Test suite \\
Policy-as-code / sandbox policies & Policy inputs & Policy predicates & Capabilities, resources & Decision logs & Policy rule set \\
Supply-chain provenance / attestations & Artifact metadata & Not specified & Not specified & Attestations, lineage & Provenance record \\
Runtime verification & Instrumented events & Runtime properties & Monitor actions & Execution traces & Monitor specification \\
\textbf{Protocol-Driven Development (PDD)} & \textbf{Typed handshakes} & \textbf{Behavioral invariants} & \textbf{Capability manifest} & \textbf{Evidence Ledger} & \textbf{Protocol bundle} \\
\bottomrule
\end{tabular}
\end{table*}

The distinction is artifact scope rather than replacement. PDD uses schemas, tests, policies, and provenance mechanisms as components of protocol-governed admission, while treating the protocol bundle as the durable admission artifact.

\subsection{Formal Verification, Testing, and Executable Correctness}

The formal foundations of PDD lie in formal methods. Hoare logic, model checking, TLA+, and Alloy show that correctness can be expressed through machine-checkable assertions, state-space exploration, invariants, temporal properties, and relational constraints~\cite{hoare1969axiomatic,clarke1999model,lamport2002specifying,jackson2002alloy}. These traditions support a premise of PDD: some correctness boundaries are more stable than the code that realizes them.

Test-Driven Development made executable artifacts central to engineering practice~\cite{beck2003tdd}; property-based testing generalized examples into laws over generated inputs~\cite{claessen2000quickcheck}.

\subsection{From Type Safety to Protocol Admissibility}

Type systems, dependent types, and Proof-Carrying Code (PCC) establish the foundation for statically verifying that programs meet semantic and safety constraints before execution~\cite{wright1994syntactic,swamy2016dependent,appel2001foundational}. Recent LLM-assisted synthesis frameworks adopt these formalisms to guide code generation against strict correctness obligations~\cite{cai2025automated,mukherjee2025synver}. 

PDD generalizes this intuition from program correctness to operational governance. While traditional type-checking and PCC are largely confined to language boundaries or safety policies, PDD treats generated implementations as untrusted candidates that must satisfy broader admissibility criteria. By combining static verification with runtime contract enforcement, PDD ensures that an artifact achieves structural validity, behavioral conformance, operational authorization, and evidentiary accountability before it is permitted to enter a governed system.

\subsection{Runtime Verification and Online Monitoring}

Runtime verification monitors executions against formal or executable properties after deployment~\cite{leucker2009runtime,havelund2004jpx}. PDD adopts this idea for generated software governance but changes the artifact boundary: the runtime monitor is not a standalone specification divorced from development. It evaluates observations against the same protocol bundle that governed admission, appends signed results to a Dynamic Evidence Ledger, and can feed violation evidence back into the generation and validation loop.

\subsection{Specifications, Interfaces, and Declarative Artifacts}

Traditional software engineering has long relied on specifications ranging from prose requirements to machine-readable interfaces. Systems such as OpenAPI, JSON Schema, and Protocol Buffers stabilize communication boundaries and enable structural automation~\cite{openapi2021,jsonschema2022,protobuf2008}, but they typically capture shape and compatibility rather than full behavioral and operational semantics.

\subsection{Policy-as-Code, Zero Trust, and Supply-Chain Provenance}

The paper also draws on policy-oriented governance. Zero-trust architecture replaces implicit trust with explicit verification~\cite{rose2020zerotrust}, and policy-as-code systems such as Open Policy Agent decouple authorization logic from application code~\cite{opa2016}. These approaches show the value of declarative governance evaluated independently of the governed artifact.

Supply-chain frameworks such as in-toto and SLSA extend this principle to provenance, grounding trust in attestations, traceability, and build metadata~\cite{torresarias2019intoto,slsa2021}.

\subsection{Automated Software Engineering and Program Synthesis}

The motivating context for PDD is the rise of code-generating language models and automated synthesis engines. The ``Software 2.0'' framing recast parts of software construction as search over program space~\cite{karpathy2017software}; recent systems synthesize programs from prompts, improve productivity on selected tasks, and tackle repository-level software engineering benchmarks~\cite{chen2021codex,peng2023copilot,yang2024sweagent,jimenez2024swebench,wu2024devin}.

Much of this literature focuses on generation capability, repair ability, and task completion. Here the emphasis is admission and governance.

\subsection{Relationship to Runtime Governance}

PDD is related to runtime governance for probabilistic systems, where model output is treated as a proposal, authority is placed in explicit constraints, and consequential admissions preserve evidence. Sovereign Agentic Loops and OpenKedge are runtime examples: SAL separates reasoning from execution through a control boundary~\cite{he2026sal}, while OpenKedge evaluates runtime operations through policy, contracts, and evidence records~\cite{he2026openkedge}.

PDD is the development-time counterpart to this pattern and extends it with continuous attestation. Runtime governance asks whether an action may execute; PDD asks whether a generated implementation is admissible before it becomes part of the system and whether its deployed behavior remains within protocol bounds. In both cases, a proposal is evaluated against explicit constraints and acceptance is accompanied by evidence.

This relationship is contextual rather than necessary. PDD stands on its own protocol model: structural invariants, behavioral invariants, operational authority, the Validator Loop, and Evidence Chains. Testing asks whether code behaves as expected in selected cases; PDD asks whether generated code has crossed a development boundary in which typed handshakes, behavioral laws, operational capabilities, validation evidence, and provenance align.

\subsection{Synthesis: Positioning of PDD}

Prior work provides mature mechanisms for interface definition, formal specification, policy enforcement, provenance, and code generation, but these mechanisms usually operate in separate lifecycle stages. PDD combines them into a development-time admission model: it generalizes interfaces from structural shape to admissibility boundaries, incorporates evidence into admission, and treats generated implementations as candidates rather than authorities.

\section{Discussion and Future Work}

Protocol-Driven Development (PDD) reframes software engineering around a shift in durable artifacts: implementation code is expected to change frequently, while protocols define admissible behavior. We consider implications, limitations, and directions for future research.

\subsection{From Programming to Protocol Engineering}

Under PDD, design effort shifts from writing a single implementation to authoring protocols that admit many possible implementations. Protocol authoring becomes a primary engineering activity: architects define typed handshakes, behavioral assertions, capability manifests, and validator requirements. The implementation becomes a replaceable witness of these constraints.

\subsection{Protocol Reuse and Protocol Registries}

A natural next step is reusable protocol registries. Just as package repositories support library reuse, protocol registries would support reuse of interface contracts, behavioral constraints, and capability boundaries for recurring patterns such as idempotent APIs, financial handlers, or audit-logging services.

\subsection{Regenerable Systems}

Repeated regeneration is possible in response to improved generators, dependency vulnerabilities, language migration, or runtime violations. The admission condition remains unchanged: regenerated implementations must satisfy the governing protocol and produce valid evidence before replacing earlier realizations. Runtime evidence can make regeneration more targeted by identifying the violated invariant, input class, dependency state, or resource envelope that triggered repair.

\subsection{Continuous Attestation}

Static evidence should not be treated as permanent proof of production behavior. Dynamic Evidence Ledgers extend auditability across the operational lifetime of a component by recording signed runtime observations, invariant checks, violations, and remediation outcomes. This shifts PDD from a deployment gate to a continuous governance loop: admission remains necessary, but deployed implementations continue to be measured against the monitorable projection of the protocol.

\subsection{Protocol Inference and Synthesis}

We assume protocols are authored by architects or human-machine collaborations. Future systems could infer invariants from existing code, telemetry, Discovery Logs, or regulatory documents, but machine-synthesized protocols raise open questions about validator trust, protocol quality, and verification of generated constraints.

\subsection{Integration with Stronger Formal Methods}

PDD is compatible with verification techniques ranging from property testing to formal proof. The Validator Loop can incorporate theorem provers, SMT solvers, model checkers, and certified compilers as mechanisms for making formal verification an operational component of software admission.

\subsection{Economic Implications}

Automated program synthesis makes candidate implementations cheaper while trustworthy specification and governance remain difficult. Under PDD, durable engineering assets include protocol libraries, validator implementations, and domain-specific capability policies, not proprietary source code alone.

\subsection{Limitations}

Protocol-Driven Development does not eliminate engineering failure. A poorly authored protocol may miss intent; an unsound validator may admit non-compliant implementations; runtime monitors may introduce overhead, false positives, blind spots, or privacy constraints; and some properties may be expensive, incomplete, or undecidable to verify. Runtime evidence can also be misleading if telemetry is sampled poorly, if sensitive observations are over-redacted, or if the monitorable projection omits the property that actually failed. Protocol authoring may be unjustified for very small, short-lived, or low-risk systems. PDD shifts difficulty from implementation authorship toward protocol quality, validator soundness, runtime monitor fidelity, and evidence integrity.

\subsection{Relationship to Sovereign Engineering}

Beyond the development model proposed here, PDD fits a broader vision of \emph{sovereign engineering}: consequential transitions are mediated by explicit constraints and verifiable evidence. In this setting, decision authority belongs to a governing artifact or control boundary rather than to the probabilistic generator that produced a proposal.

OpenKedge and Sovereign Agentic Loops provide runtime examples of this pattern. PDD applies the same idea to software construction, governing whether proposed implementations may enter the codebase and whether deployed implementations remain within their protocol envelope. The framework remains self-contained as a protocol model while allowing runtime systems to contribute continuous evidence.

\subsection{Summary}

Protocol-Driven Development defines a design space in which software systems are specified through protocols, regenerated under explicit constraints, and governed through validation and runtime evidence. Future work includes protocol registries, automated protocol synthesis, formal-verification integration, runtime attestation policies, and deployment studies.

\section{Conclusion}

Automated synthesis engines have changed the economics of software engineering. Candidate implementations are becoming cheaper to produce and easier to regenerate. Under these conditions, the primary challenge is no longer only to author code efficiently, but to define the boundaries within which generated code may be admitted, composed, and replaced.

We proposed \textbf{Protocol-Driven Development (PDD)}, a software engineering model in which the primary artifact is a machine-enforceable protocol rather than implementation code. We defined a protocol as a triplet of structural, behavioral, and operational invariants that together characterize the admissible implementation space of a software component. Under this formulation, implementations are transient realizations discovered through constrained search rather than permanent objects of engineering preservation.

We outlined the \emph{Validator Loop} to separate protocol authoring, candidate generation, verification, and admission into distinct roles. Automated generators may explore the implementation space, but an artifact becomes admissible only when a validation engine establishes compliance and emits verifiable evidence. Discovery Logs and Evidence Chains make acceptance auditable and accountable; replay is supported when validator inputs are preserved.

We then extended this admission model into operation through Dynamic Evidence Ledgers and a Runtime Verification Layer. Runtime observations, invariant checks, violation reports, and remediation outcomes can be appended as signed evidence, allowing deployed implementations to remain accountable to the protocol after release. When runtime attestation fails, the violation becomes structured repair context, but any generated patch must still pass validation before replacement.

We further developed a theoretical foundation for PDD by modeling software construction as search over a protocol-defined implementation set and formalizing basic properties such as sound acceptance, protocol-level substitutability, and monotonic strengthening under protocol refinement. These results formalize the intuition that implementations satisfying the same protocol can vary internally while remaining interchangeable under the protocol's observation model.

We also presented a reference architecture, illustrative case studies, and an evaluation agenda outlining how PDD can be applied and tested across functions, pipelines, and automated microservices. By combining ideas from formal methods, executable testing, runtime verification, policy-as-code, software provenance, and automated software engineering, PDD defines a governance model for software synthesis under low-cost implementation generation.

As implementation cost approaches zero, the enduring intellectual product of software engineering becomes the set of formal constraints that define admissible behavior, together with the evidence that those constraints were respected.

Protocol-Driven Development frames software construction and operation as constrained discovery with explicit evidence. It points toward systems in which reasoning, implementation, and execution are treated as proposals admitted through formal boundaries and verifiable evidence. Code remains changeable, but the protocol carries the durable authority to define which generated software artifacts are admissible and whether deployed behavior remains within bounds.

\appendix
\section{Appendix: Illustrative Minimal PDD Artifact Bundle}
\label{app:minimal-pdd-bundle}

This appendix sketches an illustrative minimal PDD artifact bundle. The example is not prescriptive: concrete systems can use OpenAPI, Protocol Buffers, JSON Schema, Rego, TLA+, property-based testing frameworks, in-toto attestations, SLSA metadata, or other formats. The artifact blocks are schematic rather than implementation syntax. The protocol bundle collects structural, behavioral, and operational constraints into one versioned artifact, and the evidence record binds an admitted implementation back to that artifact.

\newenvironment{artifactblock}
{\par\smallskip\noindent\small\renewcommand{\arraystretch}{0.95}\begin{tabular}{@{}>{\raggedright\arraybackslash}p{0.98\columnwidth}@{}}}
{\end{tabular}\par\smallskip}

\subsection{Bundle Structure}

A minimal bundle contains a manifest, invariant files, validator requirements, and an evidence namespace. One directory layout is:

\begin{artifactblock}
fraud-score.protocol/\\
\hspace*{1em}protocol.yaml\\
\hspace*{1em}structural/\\
\hspace*{2em}request-response.schema.yaml\\
\hspace*{1em}behavioral/\\
\hspace*{2em}scoring.properties.yaml\\
\hspace*{1em}operational/\\
\hspace*{2em}capabilities.yaml\\
\hspace*{1em}validators/\\
\hspace*{2em}validator-set.yaml\\
\hspace*{1em}evidence/\\
\hspace*{2em}admission-record.json\\
\hspace*{2em}runtime-ledger.jsonl
\end{artifactblock}

The top-level manifest identifies the protocol, its version, and the invariant artifacts that define admission:

\begin{artifactblock}
protocol\_id: fraud-score\\
version: 1.0.0\\
component: risk.scoring.FraudScore\\
invariants:\\
\hspace*{1em}structural: structural/request-response.schema.yaml\\
\hspace*{1em}behavioral: behavioral/scoring.properties.yaml\\
\hspace*{1em}operational: operational/capabilities.yaml\\
validators:\\
\hspace*{1em}required\_set: validators/validator-set.yaml\\
evidence:\\
\hspace*{1em}namespace: evidence/
\end{artifactblock}

\subsection{Structural Invariant}

The structural invariant fixes the typed handshake. A concrete implementation could encode it in JSON Schema, OpenAPI, Protocol Buffers, or another schema language.

\begin{artifactblock}
request:\\
\hspace*{1em}type: object\\
\hspace*{1em}required: [transaction\_id, account\_id, amount\_cents]\\
\hspace*{1em}properties:\\
\hspace*{2em}transaction\_id: string\\
\hspace*{2em}account\_id: string\\
\hspace*{2em}amount\_cents: integer, minimum 0\\
\hspace*{2em}merchant\_country: ISO-3166 alpha-2 string\\
response:\\
\hspace*{1em}type: object\\
\hspace*{1em}required: [transaction\_id, risk\_score, decision]\\
\hspace*{1em}properties:\\
\hspace*{2em}transaction\_id: string\\
\hspace*{2em}risk\_score: number, range [0.0, 1.0]\\
\hspace*{2em}decision: approve | review | decline\\
errors: invalid\_request | dependency\_unavailable
\end{artifactblock}

\subsection{Behavioral Invariant}

The behavioral invariant records protocol-visible semantic properties. The pseudo-format below is intentionally neutral; a concrete validator could translate these entries into property-based tests, metamorphic checks, or formal assertions.

\begin{artifactblock}
properties:\\
\hspace*{1em}- name: deterministic\_scoring\\
\hspace*{2em}for\_all: request\\
\hspace*{2em}require: score(request) == score(request)\\
\hspace*{1em}- name: score\_range\\
\hspace*{2em}for\_all: request\\
\hspace*{2em}require: 0.0 \(\leq\) risk\_score \(\leq\) 1.0\\
\hspace*{1em}- name: monotone\_amount\_risk\\
\hspace*{2em}for\_all: request\_a, request\_b\\
\hspace*{2em}when: same fields except amount\_cents\\
\hspace*{2em}require: larger amount does not lower risk\_score\\
\hspace*{1em}- name: invalid\_request\_fails\_closed\\
\hspace*{2em}when: missing required field\\
\hspace*{2em}require: error.kind == invalid\_request
\end{artifactblock}

\subsection{Operational Invariant}

The operational invariant (often expressed as a capability manifest) constrains what the generated implementation is permitted to do while satisfying the structural and behavioral contract.

\begin{artifactblock}
capabilities:\\
\hspace*{1em}network:\\
\hspace*{2em}outbound\_allowlist: [feature-store.internal:443]\\
\hspace*{2em}deny\_other\_outbound: true\\
\hspace*{1em}filesystem:\\
\hspace*{2em}read: []\\
\hspace*{2em}write: []\\
\hspace*{1em}dependencies:\\
\hspace*{2em}allow: [risk-common, protocol-runtime]\\
\hspace*{1em}resources:\\
\hspace*{2em}max\_latency\_ms\_p95: 75\\
\hspace*{2em}max\_memory\_mb: 256\\
\hspace*{2em}max\_feature\_store\_calls\_per\_request: 1\\
\hspace*{1em}secrets:\\
\hspace*{2em}allow: [FEATURE\_STORE\_TOKEN]\\
\hspace*{1em}background\_work:\\
\hspace*{2em}allowed: false
\end{artifactblock}

\subsection{Evidence Object}

An evidence object records the admission decision for one candidate implementation. The record is not the protocol itself; it is a signed link between a protocol version, a candidate artifact, validator execution, and observed validation results.

\begin{artifactblock}
evidence\_id: evd\_2026\_05\_11\_001\\
protocol:\\
\hspace*{1em}protocol\_id: fraud-score\\
\hspace*{1em}version: 1.0.0\\
\hspace*{1em}bundle\_digest: sha256:3b8f...\\
implementation:\\
\hspace*{1em}artifact\_id: risk-scoring-service\\
\hspace*{1em}artifact\_digest: sha256:91ac...\\
\hspace*{1em}language: python\\
\hspace*{1em}runtime: python-3.12\\
validators:\\
\hspace*{1em}- schema-conformance, version 0.4.2, result pass\\
\hspace*{1em}- property-check, version 0.9.1, result pass\\
\hspace*{2em}generated\_cases: 5000\\
\hspace*{2em}counterexamples: 0\\
\hspace*{1em}- capability-monitor, version 0.3.0, result pass\\
\hspace*{2em}max\_latency\_ms\_p95: 61\\
\hspace*{2em}network\_violations: 0\\
\hspace*{2em}filesystem\_writes: 0\\
decision: admit\\
issued\_at: 2026-05-11T00:00:00Z\\
issuer: validation-engine.example\\
signature: sig:...
\end{artifactblock}

\subsection{Runtime Ledger Block}

After deployment, runtime attestation blocks can be appended to the evidence namespace. The example below records a monitorable operational violation without requiring the ledger to expose raw customer data.

\begin{artifactblock}
ledger\_block\_id: evd\_2026\_05\_11\_runtime\_0042\\
previous\_block\_digest: sha256:aa17...\\
protocol:\\
\hspace*{1em}protocol\_id: fraud-score\\
\hspace*{1em}version: 1.0.0\\
implementation:\\
\hspace*{1em}artifact\_digest: sha256:91ac...\\
\hspace*{1em}deployed\_version: risk-scoring-service@2026.05.11\\
runtime\_verifier:\\
\hspace*{1em}name: capability-monitor\\
\hspace*{1em}version: 0.3.0\\
interval:\\
\hspace*{1em}start: 2026-05-11T00:05:00Z\\
\hspace*{1em}end: 2026-05-11T00:06:00Z\\
attestation:\\
\hspace*{1em}decision: violation\\
\hspace*{1em}violated\_invariant: max\_feature\_store\_calls\_per\_request\\
\hspace*{1em}observed\_value: 2\\
\hspace*{1em}allowed\_value: 1\\
\hspace*{1em}trace\_digest: sha256:f03c...\\
\hspace*{1em}raw\_trace\_location: access-controlled://traces/f03c...\\
action:\\
\hspace*{1em}runtime: route quarantined\\
\hspace*{1em}remediation\_context: repairctx\_2026\_05\_11\_0042\\
signature: sig:...
\end{artifactblock}

The artifacts illustrate the separation central to PDD. The protocol bundle defines the admissibility boundary, the implementation is a candidate realization, the admission evidence records why that candidate entered the system, and runtime ledger blocks record whether deployed behavior remains within the monitorable protocol boundary. Other systems can choose different concrete encodings while preserving the same artifact roles.

\bibliography{references}

\end{document}